\newcommand{\hrefb}[3][black]{\href{#2}{\color{#1}{#3}}}%
\providecommand{\IfPrinterVer}[2]{#2}%
\numberwithin{figure}{section}%
\numberwithin{table}{section}%
\numberwithin{equation}{section}%
\newcommand{\SarielComp}[1]{}
\newcommand{\NotSarielComp}[1]{#1}%
\newcommand{\SarielComp}[1]{#1}%
\newcommand{\NotSarielComp}[1]{}%
\newcommand{\Of}{\Mh{\mathcal{O}}}%
\newcommand{\BSet}{\Mh{{B}}}%
\newcommand{\EBSet}{\Mh{{B^+}}}%
\theoremstyle{plain}%
\newtheorem{theorem}{Theorem}[section]
\newtheorem{lemma}[theorem]{Lemma}
\theoremstyle{plain}%
\newtheorem*{remark:unnumbered}[FakeCounter]{Remark}%
\newtheorem{definition}[theorem]{Definition}
\newtheorem*{defn:unnumbered}[FakeCounter]{Definition}
\newcommand{\myqedsymbol}{\rule{2mm}{2mm}}
\theoremstyle{nonumberplain}%
\newtheorem{proof}{Proof:}%
\newcommand{\eps}{\varepsilon}%
\newcommand{\epsA}{\Mh{\xi}}
\newcommand{\epsB}{\delta}
\newcommand{\epsR}{\Mh{\vartheta}}%
\newcommand{\ceil}[1]{\left\lceil {#1} \right\rceil}
\newcommand{\nextX}[1]{\mathrm{next}\pth{#1}}%
\newcommand{\prevX}[1]{\mathrm{prev}\pth{#1}}%
\newcommand{\GSet}{\mathcal{S}}
\newcommand{\HLinkShort}[2]{\hyperref[#2]{#1\ref*{#2}}}
\newcommand{\HLink}[2]{\hyperref[#2]{#1~\ref*{#2}}}
\newcommand{\HLinkPage}[2]{\hyperref[#2]{#1~\ref*{#2}%
      $_\text{p\pageref{#2}}$}}
\newcommand{\HLinkPageOnly}[1]{\hyperref[#1]{Page~\refpage*{#1}%
      $_\text{p\pageref{#1}}$}}
\newcommand{\HLinkSuffix}[3]{\hyperref[#2]{#1\ref*{#2}{#3}}}
\newcommand{\HLinkPageSuffix}[3]{\hyperref[#2]{#1\ref*{#2}%
      #3$_\text{p\pageref{#2}}$}}
\newcommand{\figlab}[1]{\label{fig:#1}}
\newcommand{\figref}[1]{\HLink{Figure}{fig:#1}}
\newcommand{\seclab}[1]{\label{sec:#1}}
\newcommand{\secref}[1]{\HLink{Section}{sec:#1}}
\newcommand{\remlab}[1]{\label{rem:#1}}
\newcommand{\defrefY}[2]{\hyperref[def:#2]{#1}}
\newcommand{\lemlab}[1]{\label{lemma:#1}}
\newcommand{\lemref}[1]{\HLink{Lemma}{lemma:#1}}%
\newcommand{\thmlab}[1]{{\label{theo:#1}}}
\newcommand{\thmref}[1]{\HLink{Theorem}{theo:#1}}
\providecommand{\eqlab}[1]{}%
\renewcommand{\eqlab}[1]{\label{equation:#1}}
\newcommand{\Eqref}[1]{\HLinkSuffix{Eq.~(}{equation:#1}{)}}
\providecommand{\Mh}[1]{{#1}}%
   \renewcommand{\IfPrinterVer}[2]{#1}%
\renewcommand{\Mh}[1]{{\textcolor{red}{#1}}}%
\newcommand{\emphic}[2]{\textbf{\emph{#1}}}%
\newcommand{\emphi}[1]{\emphic{#1}{#1}}%
\definecolor{blue25emph}{rgb}{0, 0, 11}
\renewcommand{\emphic}[2]{%
   \textcolor{blue25emph}{%
      \textbf{\emph{#1}}}%
   \index{#2}}
\newcommand{\IntRange}[1]{\mleft[ #1 \mright]}
\newcommand{\IRX}[1]{\IntRange{#1}}%
\newcommand{\IRY}[2]{\left[ #1:#2 \right]}
\newcommand{\ILY}[2]{\left\llbracket #1, #2 \right\rrbracket}
\renewcommand{\Re}{\mathbb{R}}%
\newcommand{\PS}{\Mh{P}}%
\newcommand{\PB}{\Mh{B}}%
\newcommand{\PC}{\Mh{C}}%
\newcommand{\distSetY}[2]{\Mh{\mathsf{d}}\pth{#1,#2}}
\newcommand{\diamX}[1]{\mathrm{diam}\pth{#1}}%
\newcommand{\Graph}{\Mh{G}}%
\newcommand{\VV}{\Mh{V}}%
\newcommand{\Edges}{\Mh{E}}%
\newcommand{\EdgesX}[1]{\Mh{E}\pth{#1}}%
\newcommand{\Set}[2]{\left\{ #1 \;\middle\vert\; #2 \right\}}
\newcommand{\cardin}[1]{\left| {#1} \right|}%
\newcommand{\pth}[2][\!]{\mleft({#2}\mright)}%
\newcommand{\brc}[1]{\left\{ {#1} \right\}}
\newcommand{\dGY}[2]{\Mh{\mathsf{d}}\pth{#1,#2}}%
\newcommand{\dGZ}[3]{\Mh{\mathsf{d}_{#1}}\pth{#2,#3}}%
\newcommand{\dY}[2]{\left\| #1 - #2 \right\|}%
\newcommand{\pp}{\Mh{p}}%
\newcommand{\pq}{\Mh{q}}%
\newcommand{\pz}{\Mh{z}}%
\newcommand{\pout}{\hat{\pi}}%
\newcommand{\Gconst}{\Mh{\GraphA}}%
\newcommand{\Geps}{\Mh{\GraphA_{\epsR}}}%
\newcommand{\GraphA}{\Mh{H}}%
\newcommand{\etal}{\textit{et~al.}\xspace}
\newcommand{\ShadowC}{\Mh{\mathcal{S}}}%
\newcommand{\ShadowY}[2]{\Mh{\ShadowC}\pth{#1,#2}}%
\renewcommand{\th}{th\xspace}
\newcommand{\NbrX}[1]{\Mh{\Gamma}\pth{#1}}%
\newcommand{\IZ}[3]{\IA\pth{#1, #2, #3}}
\newcommand{\IA}{\Mh{\mathtt{I}}}%
\newcommand{\ExpZC}{\Mh{\mathrm{G_E}}}
\newcommand{\ExpZ}[3]{\Mh{\mathrm{G_E}}\pth{#1,#2,#3}}%
\newcommand{\ISet}{\Mh{\mathcal{I}}}%
\newlist{compactenumA}{enumerate}{5}%
\setlist[compactenumA]{topsep=0pt,itemsep=-1ex,partopsep=1ex,parsep=1ex,%
   label=(\Alph*)}%
\newlist{compactenuma}{enumerate}{5}%
\setlist[compactenuma]{topsep=0pt,itemsep=-1ex,partopsep=1ex,parsep=1ex,%
   label=(\alph*)}%
\newlist{compactenumI}{enumerate}{5}%
\setlist[compactenumI]{topsep=0pt,itemsep=-1ex,partopsep=1ex,parsep=1ex,%
   label=(\Roman*)}%
\newlist{compactenumi}{enumerate}{5}%
\setlist[compactenumi]{topsep=0pt,itemsep=-1ex,partopsep=1ex,parsep=1ex,%
   label=(\roman*)}%
\newlist{compactitem}{itemize}{5}%
\setlist[compactitem]{topsep=0pt,itemsep=-1ex,partopsep=1ex,parsep=1ex,\label=ensuremath{\bullet}}%
\newcommand{\DesSet}{\Mh{\EuScript{D}}}%
\newcommand{\AnsSet}{\Mh{\EuScript{A}}}%
\newcommand{\powTwoX}[1]{\mathrm{pow}_2\pth{#1}}%
\newcommand{\DesX}[1]{#1 \cap {\DesSet}}%
\newlength{\savedparindent}
\newcommand{\SaveIndent}{\setlength{\savedparindent}{\parindent}}
\newcommand{\nz}{\Mh{N}}%
\newcommand{\rightX}[1]{\Mh{\mathrm{right}}\pth{#1}}%
\newcommand{\leftX}[1]{\Mh{\mathrm{left}}\pth{#1}}%
\newcommand{\Term}[1]{\textsf{#1}}
\newcommand{\WSPD}{\Term{WSPD}\xspace}%
\newcommand{\SpreadC}{\Mh{\Phi}}%
\newcommand{\CPX}[1]{\Mh{\mathrm{c{}p}}\pth{#1}}%
\newcommand{\rchY}[2]{\Mh{R}_{#1}\pth{#2}}%
\newcommand{\orchX}[1]{\Mh{Q}_{#1}}%
\newcommand{\levelX}[1]{\Mh{\ell}\pth{#1}}%
\newcommand{\sizeX}[1]{\Mh{n}\pth{#1}}
\newcommand{\sizeBadX}[1]{\Mh{b}\pth{#1}}
\newcommand{\remove}[1]{}%
\newcommand{\GSB}{\Mh{\Graph_{\SpreadC}}}%
\newcommand{\Tree}{\Mh{T}}%
\newcommand{\cell}{\Box}%
\newcommand{\cellX}[1]{\cell_{#1}}%
\newcommand{\PcellX}[1]{\PS_{#1}}%
\newcommand{\WS}{\Mh{\mathcal{W}}}%
\newcommand{\Here}{\typeout{LOCATION: \currfilename\space L\the\inputlineno}\xspace}
\newcommand{\lossX}[1]{\Mh{\mathcal{L}}\pth{#1}}
\newcommand{\Daniel}{D\'aniel\xspace}%
\newcommand{\Olah}{Ol\'ah\xspace}%
\providecommand{\Mh}[1]{{#1}}%
\newcommand{\sLoc}{\Mh{s}}%
\newcommand{\tLoc}{\Mh{t}}%
\newcommand{\lShort}{\Mh{h}}%
\newcommand{\lLong}{\Mh{\ell}}%
\newcommand{\Shift}{\Mh{\CDelta}}%
\newcommand{\CDelta}{\Delta}
\newcommand{\ILeft}{\Mh{\mathsf{L}}}%
\newcommand{\IRight}{\Mh{\mathsf{R}}}%
\newcommand{\ballY}[2]{\mathrm{ball}\pth{#1, #2}}%
\newlength{\ppicX}
\newlength{\ppicY}
\newcommand{\order}{\sigma}
\newcommand{\orderset}{\Pi}
\newcommand{\ordAll}{\orderset^+}%
\newcommand{\constA}{\Mh{c}}%
\newcommand{\atgen}{\symbol{'100}}
\newcommand{\SarielThanks}[1]{\thanks{Department of Computer Science;
      University of Illinois; 201 N. Goodwin Avenue; Urbana, IL,
      61801, USA; {\tt sariel\atgen{}illinois.edu}; {\tt
         \url{http://sarielhp.org/}.} #1}}
\newcommand{\KevinThanks}[1]{%
   \thanks{%
      Department of Mathematics and Computing Science, TU Eindhoven,
      P.O. Box 513, 5600 MB Eindhoven, The Netherlands. %
      #1}%
}
\newcommand{\OlahThanks}[1]{%
   \thanks{%
      Department of Mathematics and Computing Science, TU Eindhoven,
      P.O. Box 513, 5600 MB Eindhoven, The Netherlands. %
      #1}%
}
\newcommand{\SaveContent}[2]{%
   \expandafter\newcommand{#1}{#2}%
}
\newcommand{\ts}{\hspace{0.6pt}}
\title{A Spanner for the Day After%
   \thanks{A preliminary version of the paper appeared in SoCG 2019
      \cite{bho-spda-19}.}%
}%
\author{%
   Kevin Buchin%
   \KevinThanks{}%
   \and%
   Sariel Har-Peled%
   \SarielThanks{Work on this paper was partially supported by NSF AF
      awards CCF-1421231 and CCF-1907400. %
   }%
   \and%
   \Daniel \Olah%
   \OlahThanks{Supported by the Netherlands Organisation for
      Scientific Research (N{W}O) through Gravitation-grant
      NETWORKS-024.002.003.}%
}
\date{\today}
\begin{document}
\maketitle

\begin{abstract}
    We show how to construct a $(1+\eps)$-spanner over a set $\PS$ of
    $n$ points in $\Re^d$ that is resilient to a catastrophic failure
    of nodes. Specifically, for prescribed parameters
    $\epsR,\eps \in (0,1)$, the computed spanner $\Graph$ has
    \begin{equation*}
        \Of\bigl(\eps^{-O(d)} \epsR^{-6} n \log n (\log\log n)^6
        \bigr)        
    \end{equation*}
    edges. Furthermore, for \emph{any} $k$, and \emph{any} deleted set
    $\BSet \subseteq \PS$ of $k$ points, the residual graph
    $\Graph \setminus \BSet$ is $(1+\eps)$-spanner for all the points
    of $\PS$ except for $(1+\epsR)k$ of them.  No previous
    constructions, beyond the trivial clique with $\Of(n^2)$ edges,
    were known with this resilience property (i.e., only a tiny
    additional fraction of vertices, $\epsR \cardin{\BSet}$, lose
    their distance preserving connectivity).

    Our construction works by first solving the exact problem in one
    dimension, and then showing a surprisingly simple and elegant
    construction in higher dimensions, that uses the one-dimensional
    construction in a black-box fashion.
\end{abstract}

\section{Introduction}

\paragraph{Spanners.} %
The vertices of a \emph{Euclidean graph} are points in $\Re^d$, and
the edges are weighted by the (Euclidean) distance between their
endpoints. Let $\Graph=(\PS,\Edges)$ be a Euclidean graph and
$\pp,\pq \in \PS$ be two vertices of $\Graph$. For a parameter
$t \geq 1$, a path between $\pp$ and $\pq$ in $\Graph$ is a
\emphi{$t$-path} if the length of the path is at most
$t \dY{\pp}{\pq}$, where $\dY{\pp}{\pq}$ is the Euclidean distance
between $\pp$ and $\pq$.  The graph $\Graph$ is a \emphi{$t$-spanner}
of $\PS$ if there is a $t$-path between any pair of points
$\pp,\pq\in \PS$.  Throughout the paper, $n$ denotes the cardinality
of the point set $\PS$, unless stated otherwise. We denote the length
of the shortest path between $\pp,\pq\in \PS$ in the graph $\Graph$ by
$\dGY{\pp}{\pq}$.

Spanners have been studied extensively. The main goal in spanner
constructions is to have small \emph{size}, that is, to use as few
edges as possible. Other desirable properties are low degrees
\cite{abcgh-sggsd-08,cc-srgs-10,s-gsfed-06}, low weight
\cite{bcfms-cgsnqt-10,gln-fgacsgs-02}, low diameter
\cite{ams-rdags-94,ams-dagss-99}, or to be resistant to failures. The
book by Narasimhan and Smid \cite{ns-gsn-07} gives a comprehensive
overview of geometric spanners.

\paragraph{Robustness.}
Here, our goal is to construct spanners that are robust according to
the notion introduced by Bose \etal \cite{bdms-rgs-13}.  Intuitively,
a spanner is robust if the deletion of $k$ vertices only harms a few
other vertices.  Formally, a graph $\Graph$ is an $f(k)$-robust
$t$-spanner, for some positive monotone function $f$, if for any set
$\BSet$ of $k$ vertices deleted in the graph, the remaining graph
$\Graph \setminus \BSet$ is still a $t$-spanner for at least $n-f(k)$
of the vertices. Note, that the graph $\Graph \setminus \BSet$ has
$n-k$ vertices -- namely, there are at most $\lossX{k} = f(k)-k$
additional vertices that no longer have good connectivity to the
remaining graph. The quantity $\lossX{k}$ is the \emphi{loss}. We are
interested in minimizing the loss.

The natural question is how many edges are needed to achieve a certain
robustness (since the clique has the desired property). That is, for a
given parameter $t$ and function $f$, what is the minimal size that is
needed to obtain an $f(k)$-robust $t$-spanner on any set of $n$
points.

A priori it is not clear that such a sparse graph should exist (for
$t$ a constant) for a point set in $\Re^d$, since the robustness
property looks quite strong.  Surprisingly,
Bose~\etal~\cite{bdms-rgs-13} showed that one can construct a
$\Of(k^2)$-robust $\Of(1)$-spanner with $\Of(n \log n)$ edges.  Bose
\etal \cite{bdms-rgs-13} proved various other bounds in the same vein
on the size of one-dimensional and higher-dimensional point sets.
Their most closely related result is that for the one-dimensional
point set, $\PS=\{1,2,\dots,n\}$ and for any $t \geq 1$, at least
$\Omega(n\log{n})$ edges are needed to construct an $\Of(k)$-robust
$t$-spanner.

An open problem left by Bose \etal~\cite{bdms-rgs-13} is the
construction of $\Of(k)$-robust spanners -- they only provide the easy
upper bound of $\Of(n^2)$ for this case.  In this paper, we present
several constructions for this case with optimal or near-optimal size.
These results even hold for a stronger requirement on the spanners,
defined next.

\paragraph{$\epsR$-reliable spanners.}
We are interested in building spanners where the loss is only
fractional. Specifically, given a parameter $\epsR$, we consider the
function $f(k) = (1+\epsR)k$. The loss in this case is
$\lossX{k} = f(k)-k = \epsR k$. A $(1+\epsR)k$-robust $t$-spanner is
\emphi{$\epsR$-reliable $t$-spanner}.

\paragraph{Exact reliable spanners.}
If the input point set is in one dimension, then one can easily
construct a $1$-spanner for the points, which means that the exact
distances between points on the line are preserved by the spanner --
indeed, simply connect the points from left to right. It becomes
significantly more challenging to construct such an exact spanner that
is reliable.

In case of robust spanners, for a function $f$ under some general conditions, Bose \etal \cite{bdms-rgs-13} gave a construction of exact $f(k)$-robust spanners. In particular, their result implies that one can construct $\Of(k \log k)$-robust $1$-spanners of size $\Of(n \log n)$ and, for any $\eps>0$, $\Of(k^{1+\eps})$-robust $1$-spanners of size $\Of(n \log\log n)$, for one-dimensional point sets.

\paragraph{Fault tolerant spanners.}
Robustness is not the only definition that captures the resistance of
a spanner network against vertex failures. A closely related notion is
fault tolerance \cite{lns-eacft-98,lns-iacft-02,l-nrftg-99}. A graph
$\Graph=(\PS,\Edges)$ is an \emph{$r$-fault tolerant $t$-spanner} if
for any set $\BSet$ of failed vertices with $\cardin{\BSet} \leq r$,
the graph $\Graph \setminus \BSet$ is still a $t$-spanner.  The
disadvantage of $r$-fault tolerance is that each vertex must have
degree at least $r+1$, otherwise the vertex can be isolated by
deleting its neighbors. Therefore, the graph has size at least
$\Omega(rn)$. There are constructions that show $\Of(rn)$ edges are
enough to build $r$-fault tolerant spanners. However, depending on the
chosen value $r$ the size can be too large.

In particular, fault tolerant spanners cannot have a near-linear
number of edges, and still withstand a widespread failure of
nodes. Specifically, if a fault tolerant spanner has $m$ edges, then
it can withstand a failure of at most $2m/n$ vertices. In sharp
contrast, $\epsR$-reliable spanners can withstand a widespread
failure. Indeed, a $\epsR$-reliable spanner can withstand a failure of
close to $n/(1+\epsR)$ of its vertices, and still have some vertices
that are connected by short paths in the remaining graph.

\paragraph{Region fault tolerant spanners.}
In a surprising result, Abam \etal \cite{adfg-rftgs-09} showed that
one can build a geometric spanner with near linear number of edges, so
that if the deleted set are all the points belonging to a convex
region (they also delete the edges intersecting this region), then the
residual graph is still a spanner for the remaining points.

\subsection{Our results} %
We investigate how to construct reliable spanners with very small loss
-- that is $\epsR$-reliable spanners. To the best of our knowledge
nothing was known on this case before this work.

\begin{compactenumA}

    \smallskip%
    \item \textbf{Exact $\Of(1)$-reliable spanner in one dimension.}
    Inspired by the reliability of constant degree expanders, we show
    how to construct an $\Of(1)$-reliable exact spanner on any
    one-dimensional set of $n$ points with at most $\Of(n \log{n})$
    edges.\footnote{This also improves an earlier preliminary
       construction by (some of) the authors \cite{bho-krsod-18}.} The
    idea of the construction is to build a binary tree over the
    points, and to build bipartite expanders between certain subsets
    of nodes in the same layer.  One can think of this construction as
    building different layers of expanders for different
    resolutions. The construction is described in
    \secref{1:spanner:1:d}. See \thmref{useless_constant} for the
    result.

    \smallskip%
    \item \textbf{Exact $\epsR$-reliable spanner in one dimension.}
    One can get added redundancy by systematically shifting the
    layers. Done carefully, this results in a $\epsR$-reliable exact
    spanner with $\Of(\epsR^{-6} n \log n)$ edges. The construction is
    described in \secref{eps:r:spanner:1:d}. See \thmref{useless_eps}
    for the result.

    \smallskip%
    \item \textbf{$\epsR$-reliable $(1+\eps)$-spanners in higher
       dimensions.} %
    We next show a simple construction of $\epsR$-reliable spanners in
    $\Re^d$, for $d$ being constant, using a recent result of Chan
    \etal \cite{chj-lsota-18}, which shows that one needs to maintain
    only a ``few'' linear orders. This immediately reduces the
    $d$-dimensional problem to maintaining a reliable spanner for each
    of these orderings, which is the problem we already solved. By
    applying a recursive scheme, using the same idea, we obtain the
    desired spanner of size
    $\Of\pth{\eps^{-O(d)} \epsR^{-6} n \log n (\log \log n)^{6}}$.
    See \secref{r:d:general} for details.

    \smallskip%
    \item \textbf{$\epsR$-reliable $(1+\eps$)-spanner with bounded
       spread.}  Since the general construction in $\Re^d$ has some
    additional factors that seem unnecessary, we present an optimal
    construction for the case when the point set has bounded spread,
    that is, the ratio of the largest and the smallest distance of
    point pairs is bounded by a polynomial of $n$.  Specifically, for
    points with spread $\SpreadC$ in $\Re^d$, for $d$ being constant,
    and for any $\eps>0$, we construct a $\epsR$-reliable
    $(1+\eps)$-spanner with
    $\Of\bigl( \eps^{-d} \epsR^{-2} n \log \SpreadC \bigr)$ edges. The
    basic idea is to construct a well-separated pair decomposition
    (\WSPD) directly on the quadtree of the point set, and convert
    every pair in the \WSPD into a reliable graph using a bipartite
    expander. The union of these graphs is the required reliable
    spanner.  See \secref{r:d:bounded:spread} and \lemref{G_2d-bdd}
    for details.
\end{compactenumA}

\paragraph{Shadow.} %
Underlying our construction is the notion of identifying the points
that loose connectivity when the failure set is removed. Intuitively,
a point is in the shadow if it is surrounded by failed points. We
believe that this concept is of independent interest -- see
\secref{shadow} for details and relevant results in one dimension.

\paragraph{The competition.} %
Independently of this work, Bose~\etal~\cite{bcdm-norgm-18} also
obtained an upper bound on the size of reliable spanners in
$\Re^d$. Their construction has $\Of(n \log^2n \log\log n)$ edges,
which is close, but worse, than our bound of
$\Of(n \log n (\log\log n)^6)$ edges. For both constructions, the
fundamental building blocks are expander graphs. However, the
construction in \cite{bcdm-norgm-18} does not use the one-dimensional
construction directly, but uses well-separated pair decomposition,
centroid decomposition and ideas for maintaining orders.

In particular, inspired by the preliminary work of Buchin \etal
\cite{bho-krsod-18}, in October 2018, Bose \etal %
\footnote{Available online here:
   \\%
\url{http://cglab.ca/~morin/publications/drafts/robust2/robust2-2018-10-29.pdf}.%
} achieved independently essentially the same one-dimensional results
(i.e., (A) and (B) above) by using expanders. Our own improved
one-dimensional results were announced earlier in a public talk on
June 12, 2018 (this was a Y{R}F talk in SoCG 2018).

To keep things in perspective, the results of
Bose~\etal~\cite{bcdm-norgm-18} are equivalent to our results (up to a
log factor), while being technically quite different (in two and
higher dimensions). In particular, their one-dimensional construction
has better dependency on the reliability parameter.  The main
advantage of our result, beyond the aforementioned (all important) log
factor, is that it is (conceptually) simpler -- the use of locality
sensitive orderings \cite{chj-lsota-18} makes all the difference.

\section{Preliminaries}
\seclab{prelims}

\subsection{Problem definition and notations}

Let $\IRX{n}$ denote the set $\{1,2,\dots,n\}$ and let
$\IRY{i}{j}=\{i,i+1,\dots,j\}$.

\begin{definition}[Robust spanner]
    Assume that we are given %
    \begin{compactenumi}
        \item a parameter $t \geq 1$,
        \item a graph $\Graph=(\PS,\Edges)$ that is a $t$-spanner,
        \item a function $f\colon \mathbb{N} \to \mathbb{R_+}$, and
        \item two point sets $\PS_1, \PS_2 \subseteq \PS$.
    \end{compactenumi}
    \smallskip%
    The graph $\Graph$ is an \emph{$f(k)$-robust $t$-spanner for
       $\PS_1 \times \PS_2$} if for any set of (failed) vertices
    $\BSet \subseteq \PS$ there exists a set $\EBSet\supseteq \BSet$
    with $\cardin{\EBSet} \leq f\pth{\cardin{\BSet}\bigr.}$ such that
    the subgraph
    \begin{equation*}
        \Graph \setminus \BSet%
        =%
        \Graph_{\PS \setminus \BSet}%
        =%
        \bigl( \PS
        \setminus \BSet, \Set{\bigl.uv \in \Edges(\Graph)}{u,v \in \PS
           \setminus \BSet}%
        \bigr)
    \end{equation*}
    induced by $\Graph$ on $\PS\setminus \BSet$ is a $t$-spanner for
    $(\PS_1 \setminus \EBSet) \times (\PS_2 \setminus \EBSet)$. That
    is, $\Graph \setminus \BSet$ has a $t$-path between all pairs of
    points $\pp \in \PS_1\setminus \EBSet$ and
    $\pq \in \PS_2 \setminus \EBSet$.  If $\PS_1 = \PS_2 = \PS$, then
    $\Graph$ is a \emphi{$f(k)$-robust $t$-spanner}.

    The vertices of $\EBSet \setminus \BSet$ are the vertices
    \emphi{harmed} by $\BSet$, and the quantity
    $\lossX{k} = f(k) -k \geq \cardin{\EBSet} - \cardin{\BSet}$ is the
    \emphi{loss}.
\end{definition}

\begin{definition}
    For a parameter $\epsR >0$, a graph $\Graph$ that is
    $(1+\epsR)k$-robust $t$-spanner is a \emphi{$\epsR$-reliable
       $t$-spanner}.
\end{definition}

\begin{definition}
    For a number $x > 0$, let $\powTwoX{x} = 2^{\ceil{\log x}}$ be the
    smallest number that is a power of $2$ and is at least as large as
    $x$.
\end{definition}

\subsection{Expanders}
\seclab{expander:c}

For a set $X$ of vertices in a graph $\Graph=(\VV,\Edges)$, let
\begin{equation*}
    \NbrX{X}%
    =%
    \Set{ v \in \VV}{ uv \in \Edges \text{ and } u \in X}    
\end{equation*}
be the \emphi{neighbors} of $X$ in $\Graph$. The following lemma,
which is a standard expander construction, provides the main building
block of our one-dimensional construction. We provide the proof only
for the sake of completeness, as this is well known. The survey by
Hoory \etal \cite{hlw-egta-06} gives a comprehensive study of
expanders.

\begin{lemma}
    \lemlab{expander}%
   Let $L,R$ be two disjoint sets, with a total of $n$ elements, and
   let $\epsA \in (0,1)$ be a parameter. One can build a bipartite
   graph $\Graph = (L \cup R, E)$ with $\Of(n/\epsA^2)$ edges, such
   that \smallskip%
   \begin{compactenumI}
       \item for any subset $X \subseteq L$, with
       $\cardin{X} \geq \epsA |L|$, we have that
       $\cardin{\NbrX{X}} > (1-\epsA)|R|$, and
       \smallskip%
       \item for any subset $Y \subseteq R$, with
       $\cardin{Y} \geq \epsA |R|$, we have that
       $\cardin{\NbrX{Y}} > (1-\epsA)|L|$.
   \end{compactenumI}
\end{lemma}

\begin{proof}
    This is a variant of an expander graph. See
    \cite[Section~5.3]{mr-ra-95} for a similar construction.

    Let $c=\ceil{3/\epsA^2}$. For every vertex in $L$, pick randomly
    and uniformly (with repetition) $\ell = c \ceil{n/|L|}$ neighbors
    in $R$. Do the same for every vertex in $R$, picking
    $c \ceil{n/|R|}$ neighbors at random from $L$. Let $\Graph$ be the
    resulting graph, after removing redundant parallel edges. Clearly,
    the number of edges is as required.

    As for the claimed properties,
    fix a subset $X \subseteq L$ of size at least $\epsA \cardin{L}$,
    and fix a subset on the right, $Z \subseteq R$ of size
    $\leq (1-\epsA)|R|$. Notice, that there are at most $2^n$ choices
    for both $X$ and $Z$. The probability that all the edges we picked
    for the vertices of $X$, stay inside $Z$, is at most
    \begin{equation*}
        (1-\epsA)^{\ell \cardin{X}}%
        \leq %
        (1-\epsA)^{\ell \epsA \cardin{L}}%
        \leq %
        (1-\epsA)^{c \epsA n}%
        \leq%
        \exp \pth{ - \epsA\cdot \frac{3}{\epsA^2} \cdot \epsA n}%
        =%
        \exp( -3  n)%
        \leq%
        1/8^n,
    \end{equation*}
    since $c \geq 3/\epsA^2$ and $1-\epsA \leq \exp(-\epsA)$. In
    particular, for a given $X$ the probability that this happens for
    any subset $Z$ is less than $2^n / 8^n = 1/4^n$. Thus, with
    probability less than $2^n / 4^n = 1/2^n$ there is an
    $X\subseteq L$ with $\NbrX{X} \leq (1-\epsA)n$. Using the same
    argument for $Y\subseteq R$ we get that the random graph does not
    have the desired properties with probability $2/2^n<1$ (for
    $n>1$). This implies that a graph with the desired properties
    exists.~
\end{proof}

\begin{remark*}
    (A) The above construction is randomized, but simple algebraic
    deterministic constructions are known
    \cite{ass-eccde-08,gg-eclss-81,rvw-ewzzg-02}.  One can improve
    their expansion ratio by repeatedly squaring them. These
    constructions are implicit, and are easily constructed quickly on
    the fly. As such, in the following, we assume that the expanders
    of \lemref{expander} are readily available for use in our
    constructions.

    (B) It is not hard to show that regular expanders (i.e., not
    bipartite as constructed above) are reliable graphs. That is,
    deleting a set of vertices leaves almost all the remaining
    vertices in a single connected component.
\end{remark*}

\section{Building reliable spanners in one dimension}
\seclab{one_dim}%

\subsection{Bounding the size of the shadow}
\seclab{shadow}

Our purpose is to build a reliable $1$-spanner in one
dimension. Intuitively, a point in $\IRX{n}$ is in trouble, if many of
its close by neighbors belong to the failure set $\BSet$. Such an
element is in the shadow of $\BSet$, defined formally next.

\begin{definition}
    Consider an arbitrary set $\BSet \subseteq \IRX{n}$ and a
    parameter $\alpha \in (0,1)$. A number $i$ is in the \emphi{left
       $\alpha$-shadow} of $\BSet$, if and only if there exists an
    integer $j \geq i$, such that
    \begin{math}
        \cardin{\IRY{i}{j} \cap \BSet \bigr. }%
        \geq%
        \alpha \cardin{\IRY{i}{j} \bigr.}.
    \end{math}
    Similarly, $i$ is in the \emphi{right $\alpha$-shadow} of $\BSet$,
    if and only if there exists an integer $h$, such that $h \leq i$
    and
    \begin{math}
        \cardin{\IRY{h}{i} \cap \BSet} \geq \alpha
        \cardin{\IRY{h}{i}}.
    \end{math}
    The \emphi{$\alpha$-shadow} of $\BSet$, denoted by
    $\ShadowY{\alpha}{\BSet}$, is the union of the left
    $\alpha$-shadow and the right $\alpha$-shadow.
    See \figref{shadow} for a visual interpretation of the shadow.
\begin{figure}
    \centerline{\includegraphics{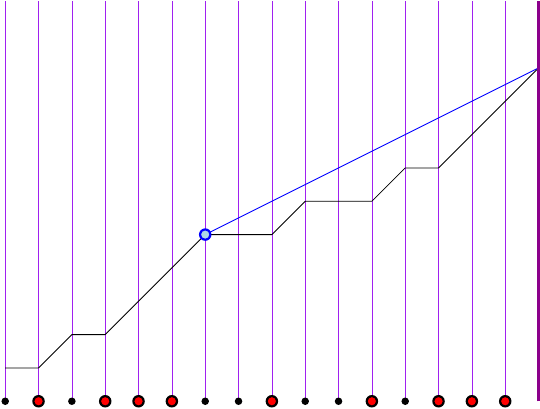}}
    \caption{Consider a set $\BSet \subseteq \IRX{n}$, the points
       $p_i=(i,1 + \cardin{ [1,i-1] \cap \BSet})$, for $i=1,\ldots,n$,
       and the associated polyline $\sigma$ resulting from connecting
       these points from left to right (the squares are the points
       that belong to $\BSet$).  A number
       $i \in \IRX{n} \setminus \BSet$ is in the left $\alpha$-shadow
       of $\BSet$, if the ray emanating from $p_i$ to the right, with
       slope $\alpha$, hits $\sigma$.}
       \figlab{shadow}
\end{figure}
\end{definition}

\begin{lemma}
    \lemlab{shadow}%
    For any set $\BSet \subseteq \IRX{n}$, and $\alpha \in (0,1)$, we
    have that
    $\cardin{\ShadowY{\alpha}{\BSet}} \leq \bigl(1 +
    2\ceil{1/\alpha}\bigr)\cardin{\BSet}$.
\end{lemma}

\begin{proof}
    Fix a set $\BSet = \brc{\pq_1 < \cdots < \pq_b}$. Let
    $\PS_0 = \brc{\pp_1< \cdots< \pp_{n-b}}$ be all the points of
    $\IRX{n}\setminus \BSet$, where $b= \cardin{\BSet}$.  Let $H$ be
    an initially empty set. Consider the process, that in the $i$\th
    iteration, for $i =1,\ldots, b$, set $L_i$ (resp. $R_i$) to be the
    set of $\ceil{1/\alpha}$ largest (resp. smallest) numbers of
    $\PS_{i-1}$ that are smaller (resp. larger) than $\pq_i$. We set
    $H = H \cup L_i \cup R_i \cup \brc{\pq_i} $, and
    $\PS_{i} = \PS_{i-1} \setminus ( L_i \cup R_i)$.

    The claim is that if $k \in \IRX{n}$ is in (say) the left
    $\alpha$-shadow of $\BSet$, then $k \in H$. If $k \in \BSet$ then
    the claim is immediate, so assume that $k \notin \BSet$. Next, fix
    a witness interval $J = \IRY{k}{k'}$, such that
    $\cardin{J\cap \BSet} \geq \alpha \cardin{J}$. Observe that the
    above process handles all the elements of $J^* = \BSet \cap J$ in
    turn. When handling an element $q \in J^*$, as long as
    $k \notin H$, the process either adds $k$ to $H$, or adds at least
    $1 + \ceil{1/\alpha}$ new points of $J$ (that were not previously
    in $H$) to $H$ -- indeed, the last $1 + \ceil{1/\alpha}$ ``free''
    points in the interval $\IRY{k}{q}$ are added to $H$.  Suppose,
    that $k \notin H$ at the end of the process. Then, we have
    \begin{equation*}
        \cardin{J} %
        \geq%
        \cardin{H \cap J}%
        \geq%
        \pth{1+ \ceil{1/\alpha}} \alpha \cardin{J}
        >%
        \cardin{J},
    \end{equation*}
     which is a contradiction.

    Applying the above argument symmetrically implies that, at the end
    of the process, $H$ contains all the points in the right and left
    shadows of $\BSet$. 
\end{proof}

One can get a better bound if $\alpha$ is close to one, as testified
by the following. This is crucial to achieve $\epsR$-reliability, since \lemref{shadow} is not sharp when $\alpha$ is close to one.

\begin{lemma}
    \lemlab{shadow_eps}%
    Fix a set $\BSet \subseteq \IRX{n}$, let $\alpha \in (2/3,1)$ be a
    parameter, and let $\ShadowY{\alpha}{\BSet}$ be the set of
    elements in the $\alpha$-shadow of $\BSet$.  We have that
    $\cardin{\ShadowY{\alpha}{\BSet}} \leq \cardin{\BSet} /
    (2\alpha-1)$.
\end{lemma}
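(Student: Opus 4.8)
The plan is to mimic the potential/prefix-sum argument of \lemref{shadow}, but to choose the weights more tightly so that they reflect the regime $\alpha \in (2/3,1)$ and yield the sharper bound $\cardin{\ShadowY{\alpha}{\BSet}} \le \cardin{\BSet}/(2\alpha-1)$. First I would set up, for the left shadow, a sequence $x_1,\dots,x_n$ with $x_i = -c$ for $i \in \BSet$ and $x_i = 1$ for $i \notin \BSet$, where now $c$ is chosen \emph{not} as $\ceil{1/\alpha}$ but as something close to $\alpha/(1-\alpha)$ (or some convenient rational surrogate); the point of the regime $\alpha > 2/3$ is that this choice makes $c > 2$, which is what ultimately produces the factor $2\alpha-1$ rather than $2(1+\ceil{1/\alpha})$. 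For $j \in \ShadowLX{\BSet}$ witnessed by $j'$ with $\beta := \cardin{\IRY{j}{j'}\cap\BSet} \ge \alpha\Delta$, $\Delta := j'-j+1$, the same computation as before gives $\sum_{i=j}^{j'} x_i = \Delta + (c-1)\beta \wait$ — I need the sign to come out negative, so I'd write $\sum_{i=j}^{j'} x_i = (\Delta-\beta) - c\beta = \Delta - (1+c)\beta \le \Delta - (1+c)\alpha\Delta = \Delta\bigl(1-(1+c)\alpha\bigr)$, which is negative precisely when $c > 1/\alpha - 1 = (1-\alpha)/\alpha$; since $\alpha > 2/3$ this threshold is below $1/2$, so we have lots of room.

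The second step is the counting/charging argument, and here I expect a small but real subtlety, which is the main obstacle: in \lemref{shadow} each deletion step removed exactly $1+\ceil{1/\alpha}$ consecutive elements (total weight $0$) and killed one element of $\BSet$, giving the crude bound. To get the factor $1/(2\alpha-1)$ I think one wants a \emph{global} accounting rather than a greedy one: sum the $x_i$ over all of $\IRX{n}$, or rather observe that the set $S = \ShadowLX{\BSet}$ of left-shadow starting points, together with the intervals witnessing them, can be organized into a collection of disjoint maximal ``bad blocks'' (this is the standard fact that the union of all prefixes with negative sum starting inside the shadow is itself a union of intervals, each of which has total $x$-weight $\le 0$). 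Within one such block $\IRY{a}{b}$ of length $m$ containing $t$ failed points, non-negativity fails, i.e. $(m-t) - ct \le 0$, so $m \le (1+c)t$, hence the number of shadow starting points in the block is at most $m \le (1+c)t$, while the block ``consumes'' $t$ failed points. Summing over blocks, $\cardin{\ShadowLX{\BSet}} \le (1+c)\cardin{\BSet}$, and then the symmetric argument for the right shadow plus the observation that a point in both the left and right shadow should only be counted once would have to be leveraged to avoid the naive doubling — this is exactly the place the $2\alpha-1$ must come from.

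Concretely, the cleanest route I'd try: do \emph{not} split into left and right separately, but directly weight each $i$ by $1$ if $i\notin\BSet$ and by $-c$ if $i\in\BSet$, and show that \emph{every} $i\in\ShadowY{\alpha}{\BSet}$ lies in an interval (extending either left or right) of negative weight; then the union $W$ of all such intervals is a disjoint union of intervals each of negative total weight, so if $W$ has $M$ points and $\cardin{W\cap\BSet}=T$ then $(M-T) - cT < 0$, i.e. $M < (1+c)T \le (1+c)\cardin{\BSet}$, and since $\ShadowY{\alpha}{\BSet}\subseteq W$ we get $\cardin{\ShadowY{\alpha}{\BSet}} < (1+c)\cardin{\BSet}$. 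Now optimize $c$: the constraint from the sign computation is $c \ge (1-\alpha)/\alpha$ actually I want the \emph{tightest} valid $c$, but larger $c$ makes $(1+c)$ worse, so I want $c$ as \emph{small} as possible, namely $c = (1-\alpha)/\alpha$ at the boundary — but then $1+c = 1/\alpha$, giving only $\cardin{\ShadowY{\alpha}{\BSet}} \le \cardin{\BSet}/\alpha$, which is weaker than claimed. So the honest resolution must be more careful: the correct computation surely uses that a shadow point's witnessing interval contains it as an \emph{endpoint}, and that left- and right-witnessing intervals overlap in a controlled way, so that the effective weight of a doubly-covered failed point is $-2c$ while a singly-covered one is $-c$ and a shadow point contributes $+1$; balancing $1 \le c(2\alpha-1)/(1-\alpha)$ type inequalities is what yields $1/(2\alpha-1)$. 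I would therefore carry out the charging on the two-sided intervals with this doubling bookkeeping; the arithmetic is routine once the geometric picture (disjoint bad blocks, each endpoint-anchored) is set up correctly, and that setup is the step I'd be most careful about.
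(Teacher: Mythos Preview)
Your global-weight approach is on the right track but, as you yourself noticed, it stalls at $\cardin{\ShadowY{\alpha}{\BSet}} \le \cardin{\BSet}/\alpha$ and the speculative ``doubling bookkeeping'' you propose to close the gap is not a plan --- it is a hope. The difficulty is real: no choice of $c$ in a single weight sequence will turn the union-of-bad-blocks argument into $1/(2\alpha-1)$, and your claim that each maximal block of the union has non-positive total weight is itself not obvious (overlapping intervals of non-positive sum can have a union of positive sum).

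The paper's key idea, which you are missing, is to \emph{split off $\BSet$ first} and bound only $\cardin{\ShadowY{\alpha}{\BSet}\setminus\BSet}$. For a point $k$ in the left shadow with $k\notin\BSet$, look at its \emph{shortest} witness interval $I=\IRY{k}{k'}$. Minimality forces every element of $\IRY{k+1}{k'}$ to lie in $\BSet$: if some $x_j=1$ with $k<j\le k'$, then one of $\IRY{k}{j-1}$ or $\IRY{j}{k'}$ would be a shorter witness. Hence $\ell:=|I|=\lceil 1/(1-\alpha)\rceil$, and the hypothesis $\alpha>2/3$ is exactly what makes $\ell\ge 3$. Now remove $J=\IRY{k}{k'-1}$ from the sequence; one checks $x(J)>0$ and that every other witness interval either avoids $J$ or strictly contains it, so removing $J$ preserves all other witnesses. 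This charges the single shadow point $k\notin\BSet$ to the $\ell-2\ge(2\alpha-1)/(1-\alpha)$ elements of $\BSet$ inside $J$, yielding $\cardin{\ShadowLX{\BSet}\setminus\BSet}\le\frac{1-\alpha}{2\alpha-1}\cardin{\BSet}$; the right shadow is symmetric, and adding back $\cardin{\BSet}$ gives precisely $\cardin{\BSet}/(2\alpha-1)$. The ``separate $\BSet$, then use minimality of witnesses'' step is the genuine content you need to supply.
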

\begin{proof}
    Let $c= 1-1/\alpha < 0$.  For $i=1,\ldots, n$, let $x_i = c$ if
    $i \in \BSet$, and $x_i=1$ otherwise.  For any interval $I$ of
    length $\CDelta$, with $\tau \CDelta$ elements in $\BSet$, such that
    $x(I) = \sum_{i \in I} x_i \leq 0$, we have that
    \begin{align*}
      x(I) \leq 0%
      &\iff%
        (1-\tau) \CDelta + c \tau \CDelta \leq 0%
        \iff%
        1-\tau \leq -\tau c%
        \iff%
        1/\tau \leq 1-c%
      \\&%
      \iff%
      1/\tau \leq 1-(1-1/\alpha)%
      \iff%
      {1}/{\tau} \leq {1}/{\alpha}%
      \iff%
      \tau \geq \alpha.
    \end{align*}
    An element $j \in \IRX{n} $ is in the left $\alpha$-shadow of
    $\BSet$ if and only if there exists an integer $j'$, such that
    $\cardin{\IRY{j}{j'} \cap \BSet} \geq \alpha \cardin{\IRY{j}{j'}}$
    and, by the above,
    \begin{math}
        x\bigl(\IRY{j}{j'}\bigr) \leq 0.
    \end{math}
    Namely, an integer $j$ in the left $\alpha$-shadow of $\BSet$
    corresponds to some prefix sum of the $x_i$s that starts at $j$
    and add up to some non-positive sum.  From this point on, we work
    with the sequence of numbers $x_1, \ldots, x_n$, using the above
    summation criterion to detect the elements in the left
    $\alpha$-shadow.

    For a location $j \in \IRX{n}$ that is in the left
    $\alpha$-shadow, let $W_{j} = \IRY{j}{j'}$ be the \emph{witness
       interval} for $j$ -- this is the shortest interval that has a
    non-positive sum that starts at $j$.  Let
    $I = W_{k} = \IRY{k}{k'}$ be the shortest witness interval, for
    any number in $\ShadowY{\alpha}{\BSet} \setminus \BSet$. For any
    $j \in \IRY{k+1}{k'}$, we have
    \begin{math}
        x\pth{\bigl. \IRY{k}{j-1}} + x\pth{\IRY{j}{k'}}%
        =%
        x(\IRY{k}{k'})%
        \leq%
        0.
    \end{math}
    Thus, if $x_j =1$, this implies that either $j$ or $k$ have
    shorter witness intervals than $I$, which is a contradiction to
    the choice of $k$.  We conclude that $x_j < 0$ for all
    $j \in \IRY{k+1}{k'}$, that is, $\IRY{k+1}{k'} \subseteq \BSet$.

    Letting $\ell = \cardin{I} = k'-k+1$, we have that
    \begin{equation*}
        \frac{\ell-1}{\ell} \geq \alpha%
        \iff%
        \ell-1 \geq \alpha\ell%
        \iff%
        \ell \geq \frac{1}{1-\alpha}%
        \iff%
        \ell \geq \ceil{\frac{1}{1-\alpha}} \geq 3,
    \end{equation*}
    as $\alpha \geq 2/3$. In particular, by the minimality of $I$, we
    have that $\ell = \ceil{1/(1-\alpha)}$.

    Let $J = \IRY{k}{k'-1} \subset I$. We have that $x(J) > 0$. For
    any $j \in \ShadowY{\alpha}{\BSet} \setminus \BSet$, such that
    $j \neq k$, consider the witness interval $W_{j}$. If $j >k$, then
    $j > k'$, as all the elements of $I$, except $k$, are in
    $\BSet$. If $j < k$ and $j' \in J$, then
    $\tau = x( \IRY{k}{j'}) > 0$, which implies that
    \begin{math}
        x\bigl(\IRY{j}{k-1}\bigr) = x\bigl(W_{j}\bigr) -\tau < 0,
    \end{math}
    but this is a contradiction to the definition of $W_{j}$. Namely,
    all the witness intervals either avoids $J$, or contain it in
    their interior. Given a witness interval $W_{j}$, such that
    $J \subset W_{j}$, we have
    \begin{math}
        x(W_{j} \setminus J) = x(W_{j}) - x(J) < x(W_{j}) \leq 0,
    \end{math}
    since $x(J) > 0$.

    So consider the new sequence of numbers
    \begin{math}
        x^{}_{\IRX{n} \setminus J} = x_1, \ldots, x_{k-1}, x_{k'},
        \ldots x_n
    \end{math}
    resulting from removing the elements that corresponds to $J$ from
    the sequence. Reclassify which elements are in the left shadow in
    the new sequence. By the above, any element that was in the shadow
    before, is going to be in the new shadow.  As such, one can charge
    the element $k$, that is in the left shadow (but not in $\BSet$),
    to all the other elements of $J$ (that are all in
    $\BSet$). Applying this charging scheme inductively, charges all
    the elements in the left shadow (that are not in $\BSet$) to
    elements in $\BSet$.  We conclude that the number of elements in
    the left shadow of $\BSet$, that are not in $\BSet$ is bounded by
    \begin{equation*}
        \frac{\cardin{\BSet}}{\cardin{J} - 1}%
        =%
        \frac{\cardin{\BSet}}{\ell - 2}%
        =%
        \frac{\cardin{\BSet}}{\ceil{1/(1-\alpha)} -2}%
        \leq%
        \frac{1-\alpha}{{1} -2(1-\alpha)}\cardin{\BSet}%
        =%
        \frac{1-\alpha}{2\alpha-1}\cardin{\BSet}.
    \end{equation*}
    The above argument can be applied symmetrically to the right
    shadow. We conclude that
    \begin{equation*}
        \cardin{\ShadowY{\alpha}{\BSet}}%
        \leq%
        \cardin{\BSet} + 2\frac{1-\alpha}{2\alpha-1}\cardin{\BSet}
        =%
        \frac{2 \alpha -1 +2 -2\alpha}{2\alpha-1}\cardin{\BSet}
        =%
        \frac{\cardin{\BSet}}{2\alpha-1}.%
    \end{equation*} 
\end{proof}

\subsection{Construction of %
   \texorpdfstring{$\Of(1)$}{O(1)}-reliable %
   exact spanners %
   in one dimension}
\seclab{1:spanner:1:d}

\subsubsection{Constructing the graph \texorpdfstring{$\Gconst$}{H}}

Assume $n$ is a power of two, and consider building the natural full
binary tree $T$ with the numbers of $\IRX{n}$ as the leaves. Every
node $v$ of $T$ corresponds to an interval of numbers of the form
$\IRY{i}{j}$, which we refer to as the \emphi{block} of $v$, see
\figref{blocks}. Let $\ISet$ be the resulting set of all blocks. In
each level one can sort the blocks of the tree from left to right. Two
adjacent blocks of the same level are \emph{neighbors}. For a block
$I \in \ISet$, let $\nextX{I}$ and $\prevX{I}$ be the blocks (in the
same level) directly to the right and left of $I$, respectively.

We build the graph of \lemref{expander} with $\epsA={1}/{16}$ for any
two neighboring blocks in $\ISet$. Let $\Gconst$ be the resulting
graph when taking the union over all the sets of edges generated by
the above.

\begin{figure}[t]
    \centerline{\includegraphics{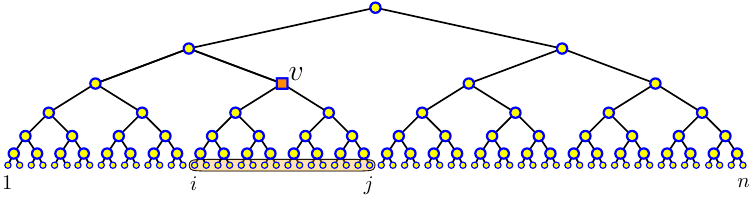}}%
    \caption{The binary tree built over $\IRX{n}$. The block of node
       $v$ is the interval $\IRY{i}{j}$.}
    \figlab{blocks}
\end{figure}

\subsubsection{Analysis}

Here, we show that the resulting graph $\Gconst$ is an $\Of(k)$-robust
$1$-spanner with $\Of(n\log n)$ edges.

\begin{lemma}
    \lemlab{G_1d-size}%
    The graph $\Gconst$ has $\Of(n\log{n})$ edges.
\end{lemma}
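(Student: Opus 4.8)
The plan is to count the edges level by level in the binary tree $T$ and then sum over all $\log n$ levels. First I would recall that by \lemref{expander}, the bipartite expander built for a pair of neighboring blocks $I$ and $\nextX{I}$ uses $\Of(|I| + |\nextX{I}|)$ edges, since $\epsA = 1/16$ is a constant and the total number of elements on the two sides is $|I| + |\nextX{I}|$. So the whole cost is (up to a constant factor) the sum, over all ordered-adjacent pairs of blocks in the same level, of the combined sizes of the two blocks.

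Next I would fix a level $\ell$ (say with the root at level $0$ and the leaves at level $\log n$). The blocks at level $\ell$ form a partition of $\IRX{n}$ into $2^\ell$ intervals, each of size $n/2^\ell$, and consecutive blocks in this left-to-right order are exactly the neighboring pairs at that level. There are $2^\ell - 1$ such pairs, and each contributes $\Of(n/2^\ell)$ edges, for a total of $\Of(2^\ell \cdot n/2^\ell) = \Of(n)$ edges at level $\ell$. Equivalently, each block $I$ at level $\ell$ participates in at most two neighboring pairs (one with $\prevX{I}$, one with $\nextX{I}$), so it is charged $\Of(|I|)$ twice; summing $|I|$ over all blocks at level $\ell$ gives $n$, hence $\Of(n)$ edges at that level.

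Finally I would sum over the levels: there are $1 + \log n = \Of(\log n)$ levels, each contributing $\Of(n)$ edges, so the total is $\Of(n \log n)$ edges, which is the claimed bound. (If $n$ is not a power of two one replaces $n$ by $\powTwoX{n} \le 2n$ throughout, which changes nothing asymptotically; the proof proper assumes $n$ is a power of two, matching the stated simplifying assumption.)

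I do not expect any real obstacle here — the only mild point to get right is that the expander edge count $\Of(n/\epsA^2)$ from \lemref{expander} is applied with the ``$n$'' there being the total size $|I| + |\nextX{I}|$ of the two adjacent blocks, and that each element of $\IRX{n}$ lies in exactly one block per level and that block is used in at most two expander gadgets, giving the clean telescoping $\sum_{I \text{ at level } \ell} \Of(|I|) = \Of(n)$.
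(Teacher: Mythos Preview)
Your proposal is correct and follows essentially the same approach as the paper: count $\Of(n)$ edges per level by noting that each level has $\Theta(2^\ell)$ adjacent block pairs each contributing $\Of(n/2^\ell)$ expander edges, then sum over the $\Of(\log n)$ levels. The only cosmetic difference is the level-indexing convention (you place the root at level $0$, the paper places blocks of size $2^i$ at level $i$), and your additional ``each block is charged at most twice'' phrasing is just an equivalent restatement of the same count.
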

\begin{proof}
    Let $h=\log{n}$ be the depth of the tree $T$. For $i=1,2,\dots,h$,
    in the $i$\th level of $T$ there are $2^{h-i}$ nodes, and the
    blocks of these nodes have size $2^i$. The number of pairs of
    adjacent blocks in level $i$ is $2^{h-i}-1$ and each pair
    contributes $\Of(2^i)$ edges. Therefore, each level of $T$
    contributes $\Of(n)$ edges.  Summing this up for all levels
    implies the bound. 
\end{proof}

There is a natural path between two leaves in the tree $T$ going
through their lowest common ancestor. However, we need something
somewhat different here, as the path has to move forward (from left to
right) in the $1$-path.

Given two numbers $i$ and $j$, where $i<j$, consider the two blocks
$I, J \in \ISet$ that correspond to the two numbers at the bottom
level. Set $I_0 =I$, and $J_0 = J$.  We now describe a
\emphi{canonical} walk from $I$ to $J$, where initially
$\ell=0$. During the walk we have two active blocks $I_\ell$ and
$J_\ell$, that are both in the same level. For any block $I \in \ISet$
we denote its parent by $p(I)$. At every iteration we bring the two
active blocks closer to each other by moving up in the tree.

Specifically, repeatedly do the following:
\begin{compactenumA}
    \item If $I_\ell$ and $J_\ell$ are neighbors then the walk is
    done.

    \item If $I_\ell$ is the right child of $p(I_\ell)$, then set
    $I_{\ell+1} = \nextX{I_{\ell}}$ and $J_{\ell+1} = J_\ell$, and
    continue to the next iteration.

    \item If $J_\ell$ is the left child of $p(J_\ell)$, then set
    $I_{\ell+1} = {I_{\ell}}$ and $J_{\ell+1} = \prevX{J_\ell}$, and
    continue to the next iteration.

    \item Otherwise, algorithm ascends. It sets
    $I_{\ell+1} = p(I_\ell)$, and $I_{\ell+1} = p(J_\ell)$, and it
    continues to the next iteration.
\end{compactenumA}%
It is easy to verify that this walk is well defined, and let
\begin{equation*}
    \pi(i,j) \equiv
    \underbrace{I_0 \rightarrow I_1 \rightarrow \cdots \rightarrow
       I_\ell}_{\textsc{ascent}}
    \rightarrow%
    \underbrace{J_\ell \rightarrow \cdots \rightarrow J_0}_{\textsc{descent}}
\end{equation*}
be the resulting walk on the blocks where we removed repeated
blocks. \figref{canonical-path} illustrates the path of blocks between
two vertices $i$ and $j$.

\begin{figure}[t]
    \begin{subfigure}{\textwidth}
        \centerline{\includegraphics[page=2]%
           {figs/tree_blocks}}%
        \caption{The canonical path in the tree.}
    \end{subfigure}

    \begin{subfigure}{\textwidth}
        \centerline{\includegraphics[scale=0.7]{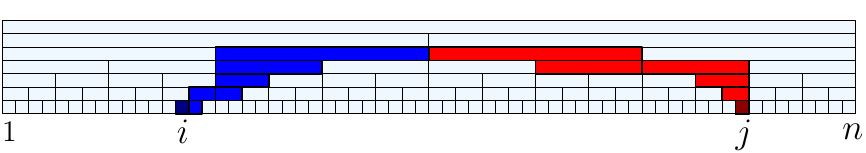}}
        \caption{The canonical path on the blocks.}
    \end{subfigure}
    \caption{The canonical path between the vertices $i$ and $j$ in
       two different representations. The blue nodes and blocks
       correspond to the ascent part and the red nodes and blocks
       correspond to the descent part of the walk.}
    \figlab{canonical-path}
\end{figure}

In the following, consider a fixed set $\BSet \subseteq \IRX{n}$ of
faulty nodes. A block $I \in \ISet$ is \emphi{$\alpha$-contaminated},
for some $\alpha\in (0,1)$, if
$\cardin{I \cap \BSet} \geq \alpha \cardin{I}$.

\begin{lemma}%
    \lemlab{contamination}%
    Consider two nodes $i,j \in \IRX{n}$, with $i<j$, and let
    $\pi(i,j)$ be the canonical path between $i$ and $j$. If any block
    of $\pi = \pi(i,j)$ is $\alpha$-contaminated, then $i$ or $j$ are
    in the $\alpha/3$-shadow of $\BSet$.
\end{lemma}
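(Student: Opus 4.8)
The plan is to trace through the structure of the canonical walk $\pi(i,j)$ and show that any contaminated block along it forces one of the endpoints to be ``surrounded'' by failed points in the sense of the shadow definition. First I would recall the shape of $\pi(i,j)$: it consists of an ascent $I_0 \to I_1 \to \cdots \to I_\ell$ and a descent $J_\ell \to \cdots \to J_0$, where at each step the active block either moves laterally (to $\nextX{\cdot}$ or $\prevX{\cdot}$) or ascends to its parent. The key structural observation is a monotonicity/containment property: every block $I_m$ appearing in the ascent, when viewed as an interval of $\IRX{n}$, has its \emph{left endpoint at least $i$} — that is, $I_m \subseteq \IRY{i}{n}$ — because the lateral moves only ever go rightward and the ascents only happen when $I_m$ is a left child (so its parent extends to the right of $I_m$, not the left). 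Symmetrically, every block $J_m$ in the descent satisfies $J_m \subseteq \IRY{1}{j}$, since lateral moves go leftward and ascents from $J_m$ happen only when $J_m$ is a right child. I would prove this by induction on $\ell$ following the four cases (A)--(D) of the walk definition.

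Next I would split into cases depending on which part of the walk the $\alpha$-contaminated block lies in. Suppose first the contaminated block is some $I_m$ in the ascent. Write $I_m = \IRY{a}{b}$; by the containment property $a \geq i$. Now I want to show $i \in \ShadowLX{\BSet}$, i.e.\ that there is $b' \geq i$ with $\cardin{\IRY{i}{b'} \cap \BSet} \geq (\alpha/3)\cardin{\IRY{i}{b'}}$. Take $b' = b$. The interval $\IRY{i}{b}$ contains $I_m = \IRY{a}{b}$, and $\cardin{I_m \cap \BSet} \geq \alpha \cardin{I_m}$. The worry is that the prefix $\IRY{i}{a-1}$ could be long and clean, diluting the density. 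Here I use that the ascent blocks have geometrically increasing sizes and are ``aligned'' with the tree: the portion $\IRY{i}{a-1}$ to the left of $I_m$ but right of $i$ is covered by blocks of strictly smaller size that were visited earlier in the ascent, so $\cardin{\IRY{i}{a-1}} \leq \cardin{I_m}$ (in fact the blocks of the ascent, laid end to end, at most double in total length each time a new level is reached, giving $\cardin{\IRY{i}{b}} \leq 3\cardin{I_m}$ — this is where the factor $3$ enters). Hence
\[
    \cardin{\IRY{i}{b} \cap \BSet}
    \;\geq\;
    \cardin{I_m \cap \BSet}
    \;\geq\;
    \alpha \cardin{I_m}
    \;\geq\;
    \frac{\alpha}{3}\cardin{\IRY{i}{b}},
\]
so $i$ is in the left $\alpha/3$-shadow. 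The case where the contaminated block $J_m$ lies in the descent is symmetric, using $J_m \subseteq \IRY{1}{j}$ and the mirrored counting to conclude $j \in \ShadowRX{\BSet} \subseteq \ShadowY{\alpha/3}{\BSet}$.

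The main obstacle I anticipate is pinning down the exact covering/size bound that yields the constant $3$ — namely that for an ascent block $I_m = \IRY{a}{b}$, the stretch $\IRY{i}{b}$ from the starting point $i$ up to the right end of $I_m$ has length at most $3\cardin{I_m}$. This requires carefully bookkeeping how the active block grows through the ascent: between consecutive ``ascend'' steps (case (D)) there is at most one lateral step (case (B)), and a parent has twice the size of a child, so the left endpoint of $I_m$ can lag behind $i$ by at most the combined length of all strictly-smaller blocks previously traversed plus one block of the current size — a geometric sum bounded by $\cardin{I_m}$, and with the block $I_m$ itself and a possible one-step lateral predecessor of the same size this totals $\leq 3\cardin{I_m}$. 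I would isolate this as a small sub-claim about the walk and verify it by induction on the level, then plug it into the density computation above. Everything else is routine case-checking against the definitions of the shadow and of $\alpha$-contamination.
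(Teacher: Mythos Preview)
Your approach is the paper's: reduce to showing that for an ascent block $I_m = \IRY{a}{b}$ one has $\cardin{\IRY{i}{b}} \leq 3\cardin{I_m}$, then plug into the density inequality (and symmetrically for the descent). The paper states the bound more directly as $i \in I_t \cup \prevX{I_t} \cup \prevX{\prevX{I_t}}$, using that there are at most two blocks per level and that an ascend from a left child keeps the left endpoint fixed, but your geometric-sum accounting in the ``main obstacle'' paragraph is equivalent. One small slip to fix: the claim that $\IRY{i}{a-1}$ is covered only by \emph{strictly smaller} earlier blocks (hence $\cardin{\IRY{i}{a-1}} \leq \cardin{I_m}$) is false --- $I_m$ may be preceded by one lateral block of the \emph{same} size, so the correct bound is $a - i < 2\cardin{I_m}$; you already account for this correctly when you later add in ``a possible one-step lateral predecessor of the same size.''
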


\begin{proof}
    Assume the contamination happens in the left half of the path,
    i.e., at some block $I_t$, during the ascent from $i$ to the
    connecting block to the descent path into $j$, see
    \figref{contamination}.
    \begin{figure}[ht]
        \centerline{%
           \includegraphics[width=0.95\linewidth]%
           {figs/contamination}%
        }%
        \vspace{-0.08cm}%
        \caption{Contamination.}
        \figlab{contamination}
    \end{figure}
    By construction, there could be only one block before $I_t$ on the
    path of the same level, and all previous blocks are smaller, and
    there are at most two blocks at each level. Furthermore, for two
    consecutive $I_j, I_{j+1}$ that are blocks of different levels,
    $I_j \subseteq I_{j+1}$. Thus we have that either $i \in I_t$, or
    $i \in \prevX{I_t}$, or $i \in \prevX{\prevX{I_t}}$, since there
    are at most
    $\cardin{I_t}+ \cardin{I_t}/2+ \dots+2+1 = 2\cardin{I_t}-1$
    vertices that are contained in the path before the block
    $I_t$. Notice that if $i \in I_t$, then it is the leftmost point
    of $I_t$.

    So, let $r$ be the maximum number in $I_t$, and observe that
    $\cardin{\IRY{i}{r}} \leq 3\cardin{I_t}$. Furthermore, since $I_t$
    is $\alpha$-contaminated, we have
    \begin{equation*}
        \cardin{\IRY{i}{r} \cap \BSet}%
        \geq%
        \cardin{I_t \cap \BSet}%
        \geq%
        \alpha \cardin{I_t}%
        \geq%
        (\alpha/3)\cardin{\IRY{i}{r}}.        
    \end{equation*}
    Thus, the number $i$ is the $\alpha/3$-shadow, as claimed.

    The other case, when the contamination happens in the right part
    during the descent, is handled symmetrically. 
\end{proof}

\begin{theorem}
    \thmlab{useless_constant}%
    The graph $\Gconst$, constructed above, on the set $\IRX{n}$ is an
    $\Of(1)$-reliable exact spanner and has $\Of(n\log{n})$ edges.
\end{theorem}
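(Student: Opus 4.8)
The plan is to combine the three ingredients already assembled: the size bound from \lemref{G_1d-size}, the shadow bound from \lemref{shadow} (used with a constant $\alpha$), and the contamination lemma \lemref{contamination}. The size claim is immediate from \lemref{G_1d-size}, so the work is entirely in verifying that $\Gconst$ is $\Of(1)$-reliable, i.e. that for every failure set $\BSet$ there is a superset $\EBSet \supseteq \BSet$ of size $\Of(\cardin{\BSet})$ such that every pair $i,j \in \IRX{n}\setminus \EBSet$ is connected by an exact ($1$-)path in $\Gconst \setminus \BSet$. The natural choice is $\EBSet = \BSet \cup \ShadowY{\alpha}{\BSet}$ for a suitable constant $\alpha$ (say the one that makes the expander parameter $\epsA = 1/16$ work out — concretely $\alpha$ a small constant so that $\alpha/3$ and the expander slack fit together). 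By \lemref{shadow}, $\cardin{\EBSet} \leq (1 + 2(1+\ceil{1/\alpha}))\cardin{\BSet} = \Of(\cardin{\BSet})$, which gives the required $\Of(k)$-robustness, hence $\Of(1)$-reliability.

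Next I would fix $i < j$ both outside $\EBSet$, take the canonical walk $\pi(i,j) = I_0 \to I_1 \to \cdots \to I_\ell \to J_\ell \to \cdots \to J_0$ on blocks, and argue that one can route an exact path through it in the residual graph. Since $i \notin \ShadowY{\alpha}{\BSet}$ and $j \notin \ShadowY{\alpha}{\BSet}$, the contrapositive of \lemref{contamination} tells us that \emph{no} block of $\pi(i,j)$ is $\alpha$-contaminated, i.e. every block $I$ on the walk has $\cardin{I \cap \BSet} < \alpha \cardin{I}$, equivalently the surviving part $I \setminus \BSet$ has size $> (1-\alpha)\cardin{I}$. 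Now the path-routing argument: between two consecutive neighboring blocks $I, I'$ on the walk we placed a bipartite expander (from \lemref{expander}) with $\epsA = 1/16$. Taking $\alpha$ small enough that $1-\alpha \geq \epsA$ (so $\alpha \le 1/16$), the surviving vertices of $I$ form a subset of relative size $\geq \epsA$, hence by the expander property their neighborhood covers more than a $(1-\epsA)$-fraction of $I'$; intersecting with the surviving vertices of $I'$ (relative size $> 1-\alpha \geq \epsA$) shows the two surviving sets are joined by an edge. Walking block by block, and starting/ending at $i$ and $j$ themselves (which lie in $I_0$ resp. $J_0$ and survive since $i,j \notin \BSet$), produces a path in $\Gconst \setminus \BSet$ from $i$ to $j$.

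The one remaining point is that this path must be an \emph{exact} path, i.e. have length exactly $\dY{i}{j} = j - i$ (the input being the one-dimensional set $\IRX{n}$, where distances are just differences). This is where the monotone structure of the canonical walk is essential: the ascent visits blocks $I_0 \subseteq I_1 \subseteq \cdots$ that move only rightward (or stay), the descent visits $J_\ell \supseteq \cdots \supseteq J_0$ moving only leftward, and the blocks are laid out left to right on the line. So I would choose the intermediate vertices greedily to be monotonically increasing along the ascent up to the junction and then monotonically decreasing along the descent down to $j$ — but in one dimension a monotone sequence of reals from $i$ to $j$ has total variation exactly $j-i$, so I actually want the routed vertices to be nondecreasing throughout, from $i$ up to $j$. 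Checking that the expander connections can always be taken to respect this monotonicity (using that $I \setminus \BSet$ still has $\geq \epsA$-fraction available, so there is freedom to pick a neighbor lying to the right of the current vertex) is the main obstacle, and it is where I expect to have to be careful: one needs that within each block the surviving set is large enough that the expander's neighborhood guarantee, intersected with ``vertices at least as far right as where we currently are,'' is still nonempty. Since the current vertex is at most the right end of the previous block and the next block extends strictly further right, losing only a bounded prefix of the next block's surviving vertices still leaves $\geq \epsA$-fraction, so the expander property applies and a rightward step exists. Summing the steps gives total length exactly $j-i$, completing the proof; the bound $\Of(n\log n)$ on the number of edges is \lemref{G_1d-size}.
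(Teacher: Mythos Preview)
Your high-level plan (size via \lemref{G_1d-size}, $\EBSet$ via the $\alpha/3$-shadow and \lemref{shadow}, uncontaminated canonical walk via the contrapositive of \lemref{contamination}) is exactly the paper's, but the routing argument has a real gap.

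The step ``survivors of $I$ and survivors of $I'$ are joined by an edge, so walking block by block produces a path'' does not chain: the endpoint in $I_{t+1}$ of the edge coming from $I_t$ need not be the starting point in $I_{t+1}$ of the edge going to $I_{t+2}$. Your attempted fix in the last paragraph is to route a \emph{single} vertex forward greedily, but the expander guarantee of \lemref{expander} is a \emph{set} guarantee: it says that any $X\subseteq I$ with $\cardin{X}\geq\epsA\cardin{I}$ has $\cardin{\NbrX{X}}>(1-\epsA)\cardin{I'}$. A single vertex in a block of size $>1/\epsA=16$ does not meet the hypothesis, so you have no control over its individual neighborhood, and the greedy step can simply fail. (A secondary issue: consecutive blocks on the walk include \emph{parent} steps $I_{t+1}=p(I_t)$, where no expander is placed between $I_t$ and $I_{t+1}$; one must instead use the expander between $I_t$ and its sibling $I_t'=\nextX{I_t}$.)

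The paper's fix is exactly the missing idea: instead of routing one vertex, track the \emph{set} $\GSet$ of all vertices reachable from $i$ by a $1$-path in $\Gconst\setminus\BSet$, and prove by induction along the ascent that $\cardin{\GSet\cap I_t}\geq \tfrac{3}{4}\cardin{I_t}$. For a ``next'' step the expander applied to $\GSet\cap I_t$ gives $\geq(1-\epsA)\cardin{I_{t+1}}$ neighbors, and subtracting the $\leq\alpha$-fraction of failures keeps $\geq\tfrac{3}{4}$; for a ``parent'' step one uses the expander to the right sibling $I_t'$ (which is at most $2\alpha$-contaminated) to get $\cardin{\GSet\cap I_t'}\geq\tfrac{3}{4}\cardin{I_t'}$ and hence $\cardin{\GSet\cap I_{t+1}}\geq\tfrac{3}{4}\cardin{I_{t+1}}$. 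Monotonicity (the $1$-path property) is automatic because every expander edge used in the ascent goes strictly rightward. Symmetrically one gets $\geq\tfrac{3}{4}$ of $J_\ell$ having $1$-paths to $j$, and the expander between $I_\ell$ and $J_\ell$ then connects the two sets. This set-based induction is the piece you are missing.
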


\begin{proof}
    The size is proved in \lemref{G_1d-size}.  Let $\alpha =
    1/32$. Let $\EBSet$ be the set of vertices that are in the
    $\alpha/3$-shadow of $\BSet$, that is,
    $\EBSet = \ShadowY{\alpha/3}{\BSet}$. By \lemref{shadow} we
    have that
    \begin{math}
        \cardin{\EBSet} \leq (1+2\ceil{3/\alpha}) \cardin{\BSet} \leq
        200 \cardin{\BSet}.
    \end{math}

    Consider any two vertices $i,j \in \IRX{n} \setminus \EBSet$. Let
    $\pi(i,j)$ be the canonical path between $i$ and $j$. None of the
    blocks in this path are $\alpha$-contaminated, by
    \lemref{contamination}.

    Let $\GSet$ be the set of all vertices that have a $1$-path from
    $i$ to them (after removing the vertices of $\BSet$). Consider the
    ascent part of the path
    $\pi(i,j)\colon I_0 \rightarrow I_1 \rightarrow \cdots \rightarrow
    I_\ell$. The claim is that for every block $I_t$ in this path, we
    have that at least ${3}/{4}$ of the vertices have $1$-paths from
    $i$ (i.e.,
    $\cardin{I_t \cap \GSet} \geq \frac{3}{4} \cardin{I_t}$).

    This claim is proven by induction. The claim trivially holds for
    $I_0$. Now, consider two consecutive blocks
    $I_t \rightarrow I_{t+1}$.  There are two cases:
    \begin{compactenumi}
        \smallskip%
        \item $I_{t+1} = \nextX{I_t}$. Then, the graph $\Gconst$
        includes the expander graph on $I_t, I_{t+1}$ described in
        \lemref{expander}. At least $({3}/{4})\cardin{I_t}$ vertices
        of $I_t$ are in $\GSet$. As such, at least
        $\frac{15}{16}\cardin{I_{t+1}}$ vertices of $I_{t+1}$ are
        reachable from the vertices of $I_t$.  Since $I_{t+1}$ is not
        $\alpha$-contaminated, at most an $\alpha$-fraction of
        vertices of $I_{t+1}$ are in $\BSet$, and it follows that
        \begin{math}
            \cardin{I_{t+1} \cap \GSet}%
            \geq%
            (\frac{15}{16} - \alpha) \cardin{I_{t+1}}%
            \geq%
            \frac{3}{4} \cardin{I_{t+1}},
        \end{math}
        as claimed.

        \medskip%
        \item $I_{t+1}$ is the parent of $I_t$. In this case, $I_t$ is
        the left child of $I_{t+1}$. Let $I_t'$ be the right child of
        $I_{t+1}$. Since $I_{t+1}$ is not $\alpha$-contaminated, we
        have that
        $\cardin{I_{t+1} \cap \BSet} \leq \alpha\cardin{I_{t+1}}$. As
        such,
        \begin{equation*}
            \cardin{I_{t}' \cap \BSet}%
            \leq%
            \cardin{I_{t+1} \cap \BSet}%
            \leq%
            2\alpha\cardin{I_{t}'}
        \end{equation*}
        Now, by the expander construction on $(I_t,I_t')$, and arguing
        as above, we have
        \begin{equation*}
            \cardin{I_{t}' \cap \GSet} %
            \geq%
            \Bigl( \frac{15}{16} -2\alpha \Bigr) \cardin{I_t'}%
            \geq %
            \frac{3}{4} \cardin{I_{t}'},
        \end{equation*}
        which implies that
        $\cardin{I_{t+1} \cap \GSet} \geq \frac{3}{4}
        \cardin{I_{t+1}}$.
    \end{compactenumi}

    \medskip%
    The symmetric claim for the descent part of the path is handled in
    a similar fashion. Therefore, at least ${3}/{4}$ of the points in
    $J_\ell$ can reach $j$ with a $1$-path. Using these and the
    expander construction between $I_\ell$ and $J_\ell$, we conclude
    that there is a $1$-path from $i$ to $j$ in
    $\Gconst \setminus \BSet$, as claimed. 
\end{proof}

Note that it is easy to generalize the construction for arbitrary
$n$. Let $h$ be the integer such that $2^{h-1}< n < 2^h$ and build the
graph $\Gconst$ on $\{1,2,3,\dots,2^h\}$. Since $\Gconst$ is a
$1$-spanner, the $1$-paths between any pair of vertices of $\IRX{n}$
do not use any vertices from $\{n+1,\dots,2^h\}$. Therefore, we can
simply delete the part of $\Gconst$ that is beyond $n$ to obtain an
$\Of(1)$-reliable $1$-spanner on $\IRX{n}$.  Since we defined $\EBSet$
to be the shadow of $\BSet$, the $\Of(1)$-reliability is inherited
automatically.

We also note that no effort was made to optimize the constants in the
above construction.

\subsection{Construction of \texorpdfstring{$\epsR$}{theta}-reliable %
   exact spanners in one %
   dimension}
\seclab{eps:r:spanner:1:d}

Here, we show how to extend \thmref{useless_constant}, to build a
$1$-spanner on $\IRX{n}$, such that for any fixed $\epsR \in (0,1)$
and any set $\BSet$ of $k$ deleted vertices, at most $(1+\epsR)k$
vertices are no longer connected (by a $1$-path) after the removal of
$\BSet$.  The basic idea is to retrace the construction of
\thmref{useless_constant}, and extend it to this more challenging
case. There are two main new ingredients: (i) a shifting scheme, and (ii)
using much larger intervals when ascending from a level upward.
Unfortunately, the details are somewhat tedious.

\subsubsection{The construction}

Let $\IRX{n}$ be the ground set, and assume that $n$ is a power of
two, and let $h = \log n$. Let
\begin{equation}
    \nz = \powTwoX{c/\epsR^2}%
    \qquad\text{ and } \qquad%
    \epsA = \frac{1}{32N},
    \eqlab{n-z-val}%
\end{equation}
where $c$ is a sufficiently large constant (e.g., $c\geq512$).  We
first connect any $i \in \IRX{n}$, to all the vertices that are within
distance at most $3N$ from it, by adding an edge between the two
vertices. Let $\Graph_0$ be the resulting graph.

Let $i_0 = \log N$. For $i=i_0, \ldots, h-1$, and $j=1,\ldots, N$, let
\begin{equation*}
    \Shift(i,j) =  1+(j-1)2^i/ N - 2^i.
\end{equation*}
For a fixed $i$, the $\Shift(i,j)$s are $N$ equally spaced numbers in
the interval $\IRY{1-2^{i}}{1-2^i/N}$, starting at its left
endpoint. Here, $i$ is the \emphi{resolution} of $\Shift(i,j)$, the
\emphi{shift} corresponding to resolution $i$ is $2^i / N$, and the
number of different shifts is $N$. For $k=0,\ldots, n/2^i$, and $i,j$
as above, the corresponding \emphi{block} is
\begin{equation*}
    \IZ{i}{j}{k} =
    \IRY{\Shift(i,j)+k2^i}{\Shift(i,j)+(k+1)2^i -1}.
\end{equation*}
Such a block is an interval of length $2^i$ that starts at
$\Shift(i,j)+k2^i$, see \figref{shifts-example}. The set of all
intervals / blocks of interest is
\begin{figure}[t]
    \begin{center}
        \includegraphics[width=0.95\linewidth]%
        {figs/shifted_intervals_example}%
        \caption{The shifted intervals $I(i,\cdot,\cdot)$ for $i=3$
           with $\nz=4$ and $n=64$. Each interval has length $2^i=8$,
           there are $\nz=4$ different shifts and there are
           ${n}/{2^i}+1=9$ blocks per each shift.}
        \figlab{shifts-example}
    \end{center}
\end{figure}
\begin{equation}
    \ISet = \Set{ \IZ{i}{j}{k}}{%
       \begin{array}{c}
         i=i_0, \ldots, \log n\\
         j=1,\ldots, \nz\\
         k=0,\ldots, n/2^i
       \end{array}%
    }.
    \eqlab{i-set}
\end{equation}

\paragraph{Constructing the graph $\Geps$.} %
Let $\ExpZ{i}{j}{k}$ denote the expander graph of \lemref{expander},
constructed over $\IZ{i}{j}{k}$ and $\IZ{i}{j}{k+1}$, with the value
of the parameter $\epsA$ as specified in \Eqref{n-z-val}.  We define
$\Geps$ to be the union of all the graphs $\ExpZC$ over all choices of
$i,j,k$, and also including the graph $\Graph_0$ (described
above). The last step is to delete vertices from $\Geps$ that are
outside the range of interest $\IRX{n}$.

\begin{remark*}
    As before, if $n$ is not a power of two, repeat the construction on
    $\IRX{\powTwoX{n}}$, and remove redundant vertices.
\end{remark*}

\subsubsection{Analysis of \texorpdfstring{$\Geps$}{G epsilon}}

\begin{lemma}
    \lemlab{num_edges}%
    The graph $\Geps$ has $\Of( \epsR^{-6} n \log n)$ edges.
\end{lemma}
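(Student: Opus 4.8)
The plan is to account for the edges contributed by each part of $\Geps$ separately and add them up. Recall that $\Geps$ is the union of the graph $\Graph_0$ together with all the expander graphs $\ExpZ{i}{j}{k}$ over the index ranges specified in \Eqref{i-set}. First I would handle $\Graph_0$: since every vertex of $\IRX{n}$ is joined only to the $\Of(\nz)$ vertices within distance $3\nz$, the graph $\Graph_0$ has $\Of(n\nz)$ edges, which is $\Of(\epsR^{-2} n)$ because $\nz = \powTwoX{c/\epsR^2} = \Theta(\epsR^{-2})$.

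Next I would bound the contribution of the expanders. Fix a level $i \in \{i_0,\dots,\log n\}$ and a shift index $j \in \{1,\dots,\nz\}$. Each interval $\IZ{i}{j}{k}$ has length $2^i$, so by \lemref{expander} the graph $\ExpZ{i}{j}{k}$ built over $\IZ{i}{j}{k}\cup\IZ{i}{j}{k+1}$ (which has $\Of(2^i)$ elements) has $\Of(2^i/\epsA^2)$ edges. The index $k$ runs over $0,\dots,n/2^i$, i.e. over $\Of(n/2^i)$ values, so summing over $k$ and using $2^i\le n$ gives $\Of(n/\epsA^2)$ edges for this pair $(i,j)$. There are $\nz$ choices of $j$, and the number of levels is $\log n - i_0 + 1 = \Of(\log n)$ since $i_0 = \log\nz$, so all the expander graphs together contribute $\Of\bigl(\nz n \log n/\epsA^2\bigr)$ edges.

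Finally I would substitute the values from \Eqref{n-z-val}: since $\epsA = 1/(32\nz)$ we get $1/\epsA^2 = \Of(\nz^2)$, and hence $\nz/\epsA^2 = \Of(\nz^3) = \Of(\epsR^{-6})$ using $\nz = \Theta(\epsR^{-2})$. This makes the expander contribution $\Of(\epsR^{-6} n \log n)$, which dominates the $\Of(\epsR^{-2}n)$ edges of $\Graph_0$, and the lemma follows. I do not expect a real obstacle here; the only point worth double-checking is that at each level $i$ the number of shifted intervals is $\Of(n/2^i)$ independently of the shift $j$, which is immediate from the index set in \Eqref{i-set}, and (if one does not assume $n$ is a power of two) that running the construction on $\IRX{\powTwoX{n}}$ changes the bound only by a constant factor.
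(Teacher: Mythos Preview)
Your proposal is correct and follows essentially the same approach as the paper's proof: count $\Of(\log n)$ resolutions, $\nz = \Of(\epsR^{-2})$ shifts per resolution, and $\Of(n/\epsA^{2}) = \Of(n\epsR^{-4})$ edges per shift, then multiply. Your write-up is simply more detailed (you explicitly handle $\Graph_0$ and the per-$k$ summation, which the paper omits), but the argument is the same.
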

\begin{proof}
    There are $\Of(\log n)$ resolutions. For every such resolution the
    number of different shifts is $\nz = \Of( 1/\epsR^2) $. For every
    shift, the number of edges created is bounded by
    $\Of(n \epsA^{-2}) = \Of(n /\epsR^4)$, by \lemref{expander}. Thus,
    $\Geps$ has $\Of( \epsR^{-6} n \log n)$ edges.~
\end{proof}

In the following, let $\ILY{s}{\ell} = \IRY{s}{s + \ell -1}$ be the
set of consecutive integers starting at $s$ containing $\ell$ numbers.

\begin{definition}
    For two vertices $x,y \in \IRX{n}$, $y$ is a \emphi{descendant} of
    $x$ (and $x$ is an \emphi{ancestor} of $y$) in $\Geps$, if
    $x < y$ and there is a $1$-path between $x$ and $y$ in
    $\Geps$. For a set $\BSet \subseteq \IRX{n}$, and a vertex $s$,
    let $\DesSet = \DesSet(\Geps,s,\BSet)$ be the set of all
    descendants of $s$ in the graph $\Geps \setminus \BSet$. Similarly, for a
    vertex $t$, let $\AnsSet = \AnsSet(\Geps,t,\BSet)$ be the set of
    ancestors of $t$ in $\Geps \setminus \BSet$.

    For an interval $I \subseteq \IRX{n}$, the set $I\cap \DesSet$ is
    the set of all nodes in $I$ that are descendants of $s$ in the
    graph $\Geps \setminus \BSet$. In a symmetric fashion, the set of
    ancestors in $I$ that can reach a node $t$ is denoted by
    $I\cap \AnsSet$.
\end{definition}

Now, we show that if a point outside of the shadow has a reasonably large fraction of descendants in an interval, then, one can find an extended interval, which is $\Theta(1/\epsR)$ times longer and has the same property. The crucial part is to carefully choose two consecutive blocks of $\ISet$, such that the expander between them can be used to extend the set of descendants.

\begin{lemma}
    \lemlab{expand_r}%
    Let $\BSet \subseteq \IRX{n}$ be the set of deleted locations,
    $\alpha = 1-\epsR/4$ and $\sLoc$ be a location in $\IRX{n}$ that
    is not in the $\alpha$-shadow of $\BSet$. Let $\lShort \geq \nz$
    be an integer number, and let $c\geq 512$ be the constant from the
    construction.  Let $\DesSet = \DesSet(\Geps,\sLoc,\BSet)$, and
    assume that
    \begin{math}
        \cardin{\DesX{\ILY{\sLoc}{\lShort}\bigr.}}%
        \geq%
        (\epsR/32)\lShort.
    \end{math}
    Then, for some number $\lLong$,
    $8\lShort/\epsR \leq \lLong \leq (c/8) \lShort/\epsR$, we have
    \begin{math}
        \cardin{\DesX{\ILY{\sLoc}{\lLong}\bigr.}}%
        \geq (\epsR/32)\lLong.
    \end{math}
\end{lemma}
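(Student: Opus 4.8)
The plan is as follows. Fix the resolution $i$ with $2^i = \powTwoX{\lShort}$, so $\lShort \le 2^i < 2\lShort$; since $\lShort \ge \nz$ this gives $2^i \ge \nz$, i.e.\ $i \ge i_0$, so it is one of the resolutions used in the construction. The left endpoints of the blocks $\IZ{i}{j}{k}$, ranging over all shifts $j$ and indices $k$, form an arithmetic grid of spacing $2^i/\nz$; pick the block $B_0 := \IZ{i}{j_0}{k_0}$ whose left endpoint lies in $(\sLoc - 2^i/\nz,\ \sLoc]$, and write $B_\ell := \IZ{i}{j_0}{k_0+\ell}$ for the consecutive same-shift blocks. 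Then $B_0$ contains all of $\ILY{\sLoc}{\lShort}$ except at most $2^i/\nz \le 2\lShort\epsR^2/c$ elements near the right end, which is negligible next to $(\epsR/32)\lShort$; hence $\cardin{\DesX{B_0}} \ge (\epsR/64)\lShort$, and since $2^i < 2\lShort$ and $\epsA = 1/(32\nz)$ with $\nz \ge c/\epsR^2$, this is comfortably at least $\epsA\cardin{B_0}$. So $B_0$ is a legitimate ``seed''.

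Second, flood ``being a descendant of $\sLoc$'' rightward through $B_0, B_1, B_2, \dots$. Since a shortest path on the line is monotone, a non-deleted neighbour in $B_{\ell+1}$ of a descendant in $B_\ell$ is again a descendant; so property (I) of \lemref{expander} gives: if $B_\ell$ already holds $\ge\epsA\cardin{B_\ell}$ descendants and $B_{\ell+1}$ has more than $2\epsA\cardin{B_{\ell+1}}$ non-deleted vertices (call it \emph{light}), then $B_{\ell+1}$ holds $\ge\epsA\cardin{B_{\ell+1}}$ descendants — and in fact all but an $\epsA$-fraction of its non-deleted vertices. Thus the flood turns essentially every live vertex it meets into a descendant, and stalls only at a block that is not light.

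Third, invoke the hypothesis that $\sLoc$ is not in the left $\alpha$-shadow of $\BSet$, $\alpha = 1-\epsR/4$ — equivalently, every prefix $\IRY{\sLoc}{x}$ is more than $(\epsR/4)$-live (the potential argument of \lemref{shadow}). Consequences: (a) the window $\ILY{\sLoc}{\lLong}$ always has more than $(\epsR/4)\lLong$ live vertices; and (b) at any resolution $i'$, a block anchored within a $(\epsR/4)$-fraction of its length from $\sLoc$ is automatically light, and a deleted run at distance $D$ from $\sLoc$ has length less than $\tfrac{\alpha}{1-\alpha}D = \tfrac{4-\epsR}{\epsR}D$, so non-light obstructions are forced to lie far from $\sLoc$ and are never too long relative to their distance. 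If the flood reaches distance $\ge 8\lShort/\epsR$, I pick $\lLong$ to be a multiple of $2^i$ in $[8\lShort/\epsR,\,c'\lShort/\epsR]$ ($c'=c/8$) as large as the reach permits; by (a) and the ``captures all but an $\epsA$-fraction'' property, the number of live vertices of $\ILY{\sLoc}{\lLong}$ missed is at most the seed block's $\le 2^i < 2\lShort$ live vertices plus an $O(\epsA) = O(\epsR^2)$ fraction, and — taking $\lLong$ near the upper end so that $2^i$ is a tiny fraction of $\lLong$ — this leaves more than $(\epsR/c')\lLong$ descendants. If instead the flood stalls at a non-light block before distance $8\lShort/\epsR$, I restart it at a coarser resolution: by (b) a block at the coarser scale anchored near $\sLoc$ is light and can be taken long enough to engulf the obstruction, and the descendants already found re-seed it; this is repeated until the reach exceeds $8\lShort/\epsR$.

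The main obstacle, and where essentially all the work in the proof lies, is the quantitative bookkeeping in this last step: one must control the descendants lost to the seed block(s), the positions and lengths of the non-light obstructions (via (b)), and the number and scale of the coarsenings, so that in the end $\lLong$ can be chosen inside $[8\lShort/\epsR,\,c'\lShort/\epsR]$ with at least $(\epsR/c')\lLong$ descendants guaranteed inside it. It is precisely this accounting that forces the large constant $c \ge 512$ and the particular constants $8$, $32$, and $c' = c/8$ appearing in the statement.
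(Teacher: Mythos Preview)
Your approach diverges from the paper's at the very first step, and this makes the rest of the argument much harder than it needs to be. You fix the resolution $i$ with $2^i \approx \lShort$ and then try to flood rightwards through $\Theta(1/\epsR)$ consecutive blocks, with a coarsening fallback when a heavy block stalls the flood. The paper instead chooses the resolution so that a \emph{single} expander hop already lands at distance $\Theta(\lShort/\epsR)$: it sets the shift $\Shift = \powTwoX{\epsR\lShort/64}$ and takes $2^i = \nz\,\Shift = \Theta(\lShort/\epsR)$. At that scale one can pick a block $\ILeft$ whose right endpoint lies in $\ILY{\sLoc+\lShort}{\Shift}$; then $\ILY{\sLoc}{\lShort}\subseteq\ILeft$, and although the seed occupies only an $\Theta(\epsR)$-fraction of $\ILeft$, the descendants it contains still form an $\epsA$-fraction of $\ILeft$ precisely because $\epsA = 1/(32\nz) = \Theta(\epsR^2)$. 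One expander application then floods the adjacent block $\IRight$, the shadow hypothesis (applied once, to the whole interval $\ILY{\sLoc}{\lLong}$) bounds the deletions in $\IRight$, and $\lLong$ is simply the right endpoint of $\IRight$ minus $\sLoc$. No multi-hop flood, no coarsening, no recursion.

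The gap in your proposal is that the part you flag as ``where essentially all the work lies'' is exactly the part that is not done, and it is not routine. At resolution $\approx\lShort$, a block $B_\ell$ can be entirely deleted as early as $\ell=1$: the shadow hypothesis only controls prefixes from $\sLoc$, and a fully deleted block of length $2^i$ at distance $\approx 2^i$ is compatible with $\alpha = 1-\epsR/4$. So the flood can stall after a single step, and your (b) gives nothing useful there (it only says the obstruction is not anchored within $(\epsR/4)2^i$ of $\sLoc$, which for $\ell\ge 1$ is automatic). Your coarsening then needs to jump to a scale large enough to swallow the obstruction while still being re-seeded by the $\Theta(\epsR\lShort)$ descendants already found; working out the constraints, that forces $2^{i'}$ up to $\Theta(\lShort/\epsR)$ --- i.e.\ you end up at the paper's resolution anyway, at which point the initial fine-scale flood was wasted effort. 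If you try to coarsen more timidly you may need several rounds, and controlling the final $\lLong$ to land in $[8\lShort/\epsR,\,c'\lShort/\epsR]$ across rounds is delicate. The clean fix is to start at the coarse scale: that is the missing idea, and it is what makes the constants $\nz = \Theta(\epsR^{-2})$, $\epsA = 1/(32\nz)$, and $c'=c/8$ fit together in the statement.
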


\SaveIndent%
\begin{proof}
    The idea is to choose the right resolution in the construction of
    $\Geps$. As a first step, let
    \begin{equation*}
        \Shift%
        =%
        \powTwoX{\epsR \lShort / 64}
        \qquad \implies \qquad
        \epsR \lShort/64 \leq \Shift \leq \epsR \lShort/32%
        \eqlab{S}%
    \end{equation*}
    be the desired shift. We pick the resolution $i$ such that the
    shift used $2^i/\nz$ is equal to $\Shift$ (i.e.,
    $\Shift= 2^i/\nz$).  This implies that $i = \log (\nz
    \Shift)$. There is a choice of $j$ and $k$, such that the right
    endpoint of $\ILeft = \IZ{i}{j}{k}$ lies in the interval
    $\ILY{\sLoc+h}{\Shift}$. Notice that
    $\ILY{\sLoc}{\lShort} \subseteq \ILeft$, since
    \begin{equation*}
        \lShort + \Shift \leq \pth{1+{64}/{\epsR}} \Shift =
        \pth{1+\frac{64}{\epsR}} \frac{2^i}{\nz} \leq
        \pth{1+\frac{64}{\epsR}} \frac{\epsR^2}{c} 2^i \leq 2^i
    \end{equation*}
    holds.  Let $\IRight = \IZ{i}{j}{k+1}$ and
    $\lLong = \rightX{\IRight} - \sLoc +1$, where $\rightX{\IRight}$
    is the right endpoint of the interval $\IRight$, see
    \figref{shift-resolution}.  Observe that
    $\epsR h /64 \leq \Shift \leq \epsR h/32$ and
    \begin{equation*}
        \lLong%
        \geq%
        2^i%
        =%
        \nz \Shift%
        \geq%
        \frac{c}{\epsR^2} \cdot \frac{\epsR \lShort}{64}%
        =%
        \frac{c}{64}\cdot \frac{\lShort}{\epsR}%
        \geq%
        \frac{8\lShort}{\epsR},
    \end{equation*}

    \begin{figure}[t]
        \centering \includegraphics{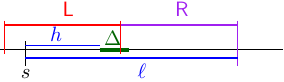}%
        \caption{The intervals $\ILeft$ and $\IRight$ and their relation to
           $\sLoc,\lShort,\Shift$ and $\lLong$.}
        \figlab{shift-resolution}
    \end{figure}

    \noindent%
    since $c\geq 512$.  Similarly,
    \begin{equation*}
        \lLong%
        \leq%
        2\cdot 2^i%
        =%
        2 \nz \Shift%
        \leq%
        2 \cdot \frac{2c}{\epsR^2} \cdot \frac{\epsR \lShort}{32}%
        =%
        \frac{c}{8}\cdot \frac{\lShort}{\epsR}.
    \end{equation*}

    Let $U = \ILY{\sLoc}{ \lShort }\cap\DesSet$. By assumption,
    $\cardin{U} \geq (\epsR/32) h$.  Since the interval $\ILeft$ is of
    length $2^i$, we have
    \begin{equation*}
        \frac{\cardin{\ILeft \cap \DesSet}}{\cardin{\ILeft}}%
        \geq %
        \frac{\cardin{U}}{2^i}%
        \geq%
        \frac{(\epsR/32) \lShort}{\nz \Shift}
        \geq%
        \frac{(\epsR/32) \lShort}{\nz (\epsR/32) \lShort}%
        =%
        \frac{1}{\nz}%
        \geq%
        \epsA.
    \end{equation*}
    Since $\sLoc$ is not in the $\alpha$-shadow of $\BSet$, it follows
    that the interval $\ILY{\sLoc}{\lLong}$ contains at least
    $(\epsR/4) \lLong$ elements that are not in $\BSet$.  Let $\tau$
    be the fraction of elements of $\IRight$ that are not in $\BSet$.
    We have that
    \begin{align*}
      \tau%
      =%
      \frac{\cardin{\IRight \setminus \BSet}}{\cardin{\IRight}}%
      & \geq%
        \frac{(\epsR/4)\lLong - \lShort - \Shift}{2^i} %
        \geq %
        \frac{(\epsR/4)(2^i + h) - (\lShort + \Shift)}{2^i} %
      \\&%
      \geq %
      \frac{(\epsR/4)
      \pth{\bigl.2^i + ({32}/{\epsR \nz})2^i} -
      ({64}/{\epsR} + 1){2^i}/{\nz}}{2^i} \\%
      & =%
        \frac{\epsR}{4} + \frac{8}{\nz} - \pth{1+
        \frac{64}{\epsR}}\frac{1}{\nz} %
        \geq %
        \frac{\epsR}{4} - \frac{64}{\epsR \nz} %
        \geq %
        \frac{\epsR}{4} - \frac{64\epsR}{c} %
        \geq %
        \frac{\epsR}{8}.
    \end{align*}
    Let $U' \subseteq \IRight$ be the set of all nodes that are
    connected by an edge of $\Geps$ to $U$. Note, that all the nodes
    of $U'$ are descendants of $s$. The graph $\ExpZ{i}{j}{k}$
    guarantees that $\cardin{U'} \geq (1-\epsA) \cardin{\IRight}$,
    where $\ExpZ{i}{j}{k}$ is the expander graph built over $\ILeft$
    and $\IRight$. We have that
    \begin{align*}
      \cardin{\bigl.\ILY{\sLoc}{\lLong}\cap\DesSet}%
      & \geq %
        \cardin{ (\IRight \setminus \BSet) \cap U'}%
        = %
        \cardin{ \IRight \setminus \BSet}
        - \cardin{(\IRight \setminus \BSet ) \cap \overline{U'}} \\
      & \geq %
        \cardin{ \IRight \setminus \BSet} - \cardin{\IRight \cap \overline{U'}}
        \geq%
        \cardin{ \IRight \setminus \BSet} - \epsA 2^i%
        =%
        (\tau - \epsA) 2^i.
    \end{align*}
    Since $\epsA \leq \epsR/16$, we have
    \begin{math}
        \displaystyle%
        \frac{ \cardin{\ILY{\sLoc}{\lLong}\cap\DesSet}}{\lLong} %
        \geq%
        \frac{ (\tau - \epsA) 2^i}{2\cdot2^i}%
        =%
        \frac{\tau - \epsA}{2}%
        \geq%
        \frac{\epsR/8 - \epsR/16}{2}%
        =%
        \frac{\epsR}{32}.
    \end{math} 
\end{proof}

\begin{remark*}
    \remlab{symmetric}%
    One can state a symmetric version of \lemref{expand_r} about the
    number of ancestors that can reach a target node $\tLoc$.
\end{remark*}

\begin{lemma}
    \lemlab{s_path}%
    Let $\BSet \subseteq \IRX{n}$ be the set of faulty vertices, and
    let $\ShadowY{\alpha}{\BSet}$ be its $\alpha$-shadow with
    $\alpha = 1-\epsR/4$. Let $\sLoc, \tLoc$ be two vertices in
    $\IRX{n} \setminus \ShadowY{\alpha}{\BSet}$, such that
    $\sLoc < \tLoc$.  Then, there is a $1$-path between $\sLoc$ and
    $\tLoc$ in $\Geps \setminus \BSet$. Further, this path between
    $\sLoc$ and $\tLoc$ uses at most $2\log{n}$ edges.
\end{lemma}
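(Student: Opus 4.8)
The plan is to grow a short path out of $\sLoc$ and a short path into $\tLoc$, doubling the ``reach'' of each at every step with \lemref{expand_r}, until the two sides can be joined inside a single block of the construction. If $\tLoc-\sLoc\le 3\nz$ then $\sLoc\tLoc$ is already an edge of $\Graph_0\subseteq\Geps$ and we are done with one edge, so assume $\tLoc-\sLoc>3\nz$.

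First I would handle the $\sLoc$ side. Since $\sLoc$ is not in the $\alpha$-shadow of $\BSet$ it is not in $\BSet$, and the interval $\ILY{\sLoc}{\nz}$ contains at least $(\epsR/4)\nz$ vertices outside $\BSet$; each such vertex is joined to $\sLoc$ by an edge of $\Graph_0$, so $\cardin{\DesX{\ILY{\sLoc}{\nz}}}\ge(\epsR/4)\nz-1\ge(\epsR/32)\nz$ (here $\nz=\powTwoX{c/\epsR^2}$ is large enough to absorb the $-1$). Applying \lemref{expand_r} repeatedly produces window lengths $\nz=\lShort_0<\lShort_1<\cdots$, each at least $8/\epsR$ times the previous, with $\cardin{\DesX{\ILY{\sLoc}{\lShort_m}}}\ge(\epsR/32)\lShort_m$; and the new descendants uncovered at the $m$-th application are reached from $\sLoc$ by a path of at most $m+1$ edges (one $\Graph_0$ edge, plus one expander edge per application of \lemref{expand_r}). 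The symmetric statement, \remref{symmetric}, does the same on the $\tLoc$ side, producing window lengths $h'_0<h'_1<\cdots$ with $\cardin{\AnsSet\cap\ILY{\tLoc-h'_m+1}{h'_m}}\ge(\epsR/32)h'_m$ and the same depth bound.

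Now the bridging. Put $D=\tLoc-\sLoc$ and stop both growths at the first index $m$ for which the resolution $i$ that the next use of \lemref{expand_r} would select — the one with $2^i=\nz\powTwoX{\epsR\lShort_m/64}$ — has $2^i\ge D$. Since each step at least triples $\log_2$ of the window and $\log_2\nz\ge 9$, this happens with $m\le (\log_2 D)/3+\Of(1)$, comfortably small, and it leaves $\lShort_m,h'_m$ only a tiny fraction of $2^i$. Then I would take one further \lemref{expand_r}-style step on each side \emph{at this resolution}, using the $\nz$ distinct shifts to slide the block partition so that $\ILY{\sLoc}{\lShort_m}$ sits inside a block $\IZ{i}{j}{k}$ with $\sLoc$ near its left end, $\ILY{\tLoc-h'_m+1}{h'_m}$ sits inside $\IZ{i}{j}{k+2}$ with $\tLoc$ near its right end, and the common middle block $M=\IZ{i}{j}{k+1}$ is precisely the ``far'' block of \emph{both} of these steps. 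By the internal estimate inside \lemref{expand_r} (the $(1-\epsA)$-expansion of \lemref{expander} together with $\sLoc,\tLoc\notin\ShadowY{\alpha}{\BSet}$), the descendants of $\sLoc$ miss at most $\epsA\cardin{M}$ vertices of $M\setminus\BSet$, the ancestors of $\tLoc$ miss at most $\epsA\cardin{M}$ vertices of $M\setminus\BSet$, and $\cardin{M\setminus\BSet}\ge(\epsR/8)\cardin{M}$. Because $\epsA=1/(32\nz)\le\epsR^2/(32c)$, we get $2\epsA\cardin{M}<(\epsR/8)\cardin{M}\le\cardin{M\setminus\BSet}$, so some $v\in M$ is at once a descendant of $\sLoc$ and an ancestor of $\tLoc$; as $M$ lies strictly between $\sLoc$ and $\tLoc$, we have $\sLoc<v<\tLoc$. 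Concatenating the $1$-path $\sLoc\to v$ (of length at most $v-\sLoc$) with the $1$-path $v\to\tLoc$ (of length at most $\tLoc-v$) yields a path of length at most $(v-\sLoc)+(\tLoc-v)=\tLoc-\sLoc=\dY{\sLoc}{\tLoc}$, i.e.\ a $1$-path, using at most $(m+2)+(m+2)\le 2\log n$ edges.

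The step I expect to be the real obstacle is the bridging: forcing the two ``far'' blocks to be literally the same block $M$ strictly between $\sLoc$ and $\tLoc$ while the penultimate windows still fit with slack and the resolution $i$ is a power of two essentially pinned to the scale $D$. This is exactly where the $\nz$ shifts per resolution do the work, letting block boundaries land on a grid of spacing $2^i/\nz$; the amount of slack one needs is what the choice $\nz=\powTwoX{c/\epsR^2}$ with $c\ge512$ provides. Everything else — the densities surviving the expander hops, the bound showing the concatenated walk is a genuine $1$-path, and the edge count — is bookkeeping on top of \lemref{expand_r}, \remref{symmetric}, and \lemref{expander}.
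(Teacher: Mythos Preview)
Your overall strategy---grow a window of descendants out of $\sLoc$ and a window of ancestors into $\tLoc$ by iterating \lemref{expand_r}, then bridge---is exactly what the paper does. The problem is the bridging geometry you describe.

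You stop when the resolution $i$ satisfies $2^i \ge D = \tLoc-\sLoc$, and then ask for $\sLoc$ near the \emph{left} end of $\IZ{i}{j}{k}$ and $\tLoc$ near the \emph{right} end of $\IZ{i}{j}{k+2}$. But that placement forces $\tLoc-\sLoc$ to be close to $3\cdot 2^i$, contradicting $2^i\ge D$. Even in the mildest reading (three consecutive blocks with $\sLoc$ somewhere in the first and $\tLoc$ somewhere in the third), one needs $\tLoc-\sLoc > 2^i$, so the configuration you specify is impossible. A second casualty of putting $\sLoc$ far from the right boundary of its block is your bound $\cardin{M\setminus\BSet}\ge(\epsR/8)\cardin{M}$: the shadow condition only controls intervals \emph{anchored} at $\sLoc$ (or $\tLoc$), so for $[\sLoc,\rightX{M}]$ it guarantees $\ge(\epsR/4)$-fraction outside $\BSet$, but if the portion of this interval lying in block $k$ has length comparable to $\cardin{M}$, all of that good mass can sit in block $k$ and none in $M$.

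The paper sidesteps both issues by dropping the middle block. It fixes in advance the smallest pair of \emph{adjacent} blocks $\ILeft,\IRight\in\ISet$ with $\sLoc\in\ILeft$ and $\tLoc\in\IRight$ (so $2^i/2\le D\le 2\cdot 2^i$), iterates \lemref{expand_r} until the first window that would escape $\ILeft$, and checks that the previous window already gives $\cardin{\ILeft\cap\DesSet}\ge\epsA\cardin{\ILeft}$; symmetrically $\cardin{\IRight\cap\AnsSet}\ge\epsA\cardin{\IRight}$. Then the expander between $\ILeft$ and $\IRight$ (\lemref{expander}) sends $\ILeft\cap\DesSet$ to more than $(1-\epsA)\cardin{\IRight}$ vertices of $\IRight$, which must meet $\IRight\cap\AnsSet$; that single edge is the bridge. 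No middle block, no need to bound contamination of a block not anchored at $\sLoc$ or $\tLoc$, and the edge count is the same $2\log n$.
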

\begin{proof}
    If $\cardin{\sLoc - \tLoc} \leq 3N$, then the two vertices are
    connected by an edge in $\Geps$ by construction, and the claim
    holds.

    Let $\ILeft$ and $\IRight$ be two adjacent consecutive blocks of
    the same size in $\ISet$ (see \Eqref{i-set}), such that
    $\sLoc \in \ILeft$ and $\tLoc \in \IRight$, and these are the
    smallest blocks for which this property holds. If there are
    several pairs of intervals of the same size that have the desired
    property, we pick the pair such that
    $\min\pth{\rightX{\ILeft} - \sLoc, \, \tLoc - \leftX{\IRight}}$ is
    maximized (i.e., the common boundary between the two intervals is
    as close to the middle $(\sLoc+\tLoc)/2$ as possible).  Let
    $2^i = \cardin{\IRight} = \cardin{\ILeft}$. It is easy to verify
    that $2^i/2 \leq \cardin{\IRY{\sLoc}{\tLoc}} \leq 2\cdot
    2^i$. Indeed, the lower bound holds by the minimality of $\ILeft$
    and $\IRight$. Otherwise, the right half of $\ILeft$ and the left
    half of $\IRight$ would also be a valid choice and would have
    smaller size. The upper bound follows from the fact that
    $\cardin{\ILeft}+\cardin{\IRight} = 2\cdot 2^i$.

    Set $L_0 = \ILY{\sLoc}{\nz}$ and
    $R_0 = \IRY{\tLoc - \nz +1}{\tLoc}$. Since $\sLoc$ and $\tLoc$ are
    not in the $\alpha$-shadow, we have that
    $\cardin{L_0 \setminus \BSet} \geq (\epsR/4) \cardin{L_0}$ and
    $\cardin{R_0 \setminus \BSet} \geq (\epsR/4) \cardin{R_0}$.  For
    $i > 0$, in the $i$\th iteration, let $L_i$ be the interval
    starting at $s$ of length $\Theta( \cardin{L_{i-1}} /\epsR)$ such
    that a fraction of at least $\epsR/32$ of its elements are descendants
    of $s$ that are not in $\BSet$. The existence of such an interval
    is guaranteed by \lemref{expand_r}. Similarly, we expand the right
    interval $R_{i-1}$ in a symmetric way.

    Let $j$ be the first iteration such that
    $L_{j+1} \not \subseteq \ILeft$. By the choice of $\ILeft$ and
    $\IRight$ and by \lemref{expand_r}, we have
    \begin{equation*}
        \frac{2^i}{4}-\frac{2^i}{\nz}%
        \leq %
        \cardin{L_{j+1}}%
        \leq%
        \frac{c}{8\epsR} \cardin{L_{j}}.
    \end{equation*}
    This implies that
    \begin{align*}
      \frac{\cardin{\ILeft\cap\DesSet}}{\cardin{\ILeft}}%
      &\geq%
        \frac{\cardin{L_j\cap\DesSet}}{\cardin{\ILeft}}%
        \geq%
        \frac{ (\epsR/32)\cardin{L_j} }{ 2^i } %
        \geq
        \frac{\epsR}{32} \cdot \frac{8\epsR}{c}
        \pth{ \frac{1}{4} - \frac{1}{\nz} }
        \geq%
        \frac{\epsR}{32} \cdot \frac{8\epsR}{c}\cdot \frac{1}{8}
      \geq \frac{1}{32\nz}%
      =%
      \epsA.
    \end{align*}
    Applying the same argumentation, using \lemref{expand_r}~for the
    reachable ancestors, we have that
    \begin{equation*}
        \cardin{\IRight \cap \AnsSet} / \cardin{\IRight}%
        \geq%
        \epsA
    \end{equation*}
    (i.e., there are at least $\epsA \cardin{\IRight}$ elements in
    $\IRight$ that have a $1$-path to $\tLoc$ in
    $\Geps \setminus \BSet$). The graph $\Geps$ contains an expander
    $\ExpZ{i}{j}{k}$ built over $\ILeft$ and $\IRight$. By the
    pigeonhole principle and the properties of the expander between
    $\ILeft$ and $\IRight$, there is an edge between a vertex of
    $\ILeft \cap \DesSet$ and a vertex of $\IRight \cap \AnsSet$. That
    is, there is a $1$-path between $s$ and $t$ in
    $\Geps \setminus \BSet$, as desired.

    By \lemref{expand_r} we have
    $8\cardin{L_{i}} \leq ({8}/{\epsR})\cardin{L_{i}} \leq
    \cardin{L_{i+1}}$ for $i=0,\dots,j$. Therefore, the number of
    iterations we do to expand $L_0$ is less than $\log{n}$. The same
    is true for $R_0$. Thus, the number of edges that we used for the
    $1$-path is bounded by $2\log{n}$. 
\end{proof}

\begin{theorem}
    \thmlab{useless_eps}%
    For parameters $n$ and $\epsR >0$, the graph $\Geps$ constructed
    over $\IRX{n}$, is a $\epsR$-reliable exact spanner.  Furthermore,
    $\Geps$ has $\Of(\epsR^{-6} n \log n)$ edges.
\end{theorem}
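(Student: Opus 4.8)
\medskip
\noindent
\textbf{Proof proposal.}
The plan is that almost all of the work is already in place: \lemref{s_path} guarantees that surviving vertices outside the shadow stay $1$-connected, while \lemref{shadow_eps} guarantees that, for $\alpha$ close to $1$, the shadow barely exceeds the failure set. The theorem then follows by feeding the same, carefully chosen $\alpha$ into both. The size bound is nothing new --- it is exactly \lemref{num_edges}, giving $\Of(\epsR^{-6} n \log n)$ edges.

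For reliability, fix a failure set $\BSet$ with $\cardin{\BSet}=k$. We may assume $\epsR\le 1$, since for larger $\epsR$ the $\epsR$-reliability requirement is weaker than $1$-reliability. Set $\alpha = 1-\epsR/4$, so that $\alpha\in(2/3,1)$, and let the harmed set be $\EBSet = \ShadowY{\alpha}{\BSet}$; this contains $\BSet$, since the singleton interval at a faulty vertex already witnesses its membership in the shadow. First, the loss: \lemref{shadow_eps} gives $\cardin{\EBSet}\le k/(2\alpha-1) = k/(1-\epsR/2)$, and because $(1+\epsR)(1-\epsR/2) = 1 + (\epsR/2)(1-\epsR)\ge 1$ for $\epsR\in(0,1)$, we get $\cardin{\EBSet}\le (1+\epsR)k$, as needed. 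Second, connectivity: for $\sLoc,\tLoc\in\IRX{n}\setminus\EBSet$ the case $\sLoc=\tLoc$ is trivial, and when $\sLoc<\tLoc$ neither endpoint lies in $\ShadowY{\alpha}{\BSet}$, so \lemref{s_path}, invoked with this very $\alpha=1-\epsR/4$, supplies a $1$-path between them inside $\Geps\setminus\BSet$ (using at most $2\log n$ edges). Hence $\Geps\setminus\BSet$ is an exact spanner for $(\IRX{n}\setminus\EBSet)\oplus(\IRX{n}\setminus\EBSet)$, which is precisely the assertion that $\Geps$ is a $(1+\epsR)k$-robust exact spanner, i.e.\ $\epsR$-reliable.

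The only remaining loose ends are edge cases, handled as in the remark after \thmref{useless_constant}: if $n$ is not a power of two, build $\Geps$ on $\IRX{\powTwoX{n}}$ and then delete every vertex exceeding $n$. Since $\Geps$ is an exact spanner, each of its $1$-paths between elements of $\IRX{n}$ is monotone and thus never leaves $\IRX{n}$, so the deletion removes no edge that such a path uses, and the shadow bound is inherited because $\BSet\subseteq\IRX{n}$; the tiny-$n$ regime $n\le 3\nz$ is trivial since $\Graph_0$ is then a clique. I do not expect a genuinely hard step --- the real content is in \lemref{s_path} --- but the place that needs care is to use one common $\alpha$ in both \lemref{s_path} and \lemref{shadow_eps}: the slack $(\epsR/2)(1-\epsR)$ in the inequality above is exactly what the choice $\alpha=1-\epsR/4$ provides, and any looser choice would push the loss past $(1+\epsR)k$.
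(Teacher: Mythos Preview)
Your proposal is correct and follows essentially the same approach as the paper: set $\alpha=1-\epsR/4$, take $\EBSet=\ShadowY{\alpha}{\BSet}$, invoke \lemref{shadow_eps} for the loss bound $\cardin{\EBSet}\le\cardin{\BSet}/(1-\epsR/2)\le(1+\epsR)\cardin{\BSet}$, and \lemref{s_path} for the $1$-paths, with \lemref{num_edges} for the edge count. Your write-up is in fact more detailed than the paper's (you spell out the inequality $(1+\epsR)(1-\epsR/2)\ge1$, note $\BSet\subseteq\EBSet$, and handle the non-power-of-two and tiny-$n$ cases explicitly), but the argument is the same.
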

\begin{proof}
    The bound on the number of edges is from \lemref{num_edges}.

    Next, fix the set $\BSet$.  Define the set $\EBSet$ to be the
    $(1-\epsR/4)$-shadow of $\BSet$. By \lemref{shadow_eps} we have
    that
    \begin{math}
        \cardin{\EBSet}%
        \leq%
        \cardin{\BSet}/(2(1-\epsR/4)-1)%
        =%
        \cardin{\BSet}/(1 - \epsR/2 )%
        \leq%
        (1+\epsR)\cardin{\BSet}.
    \end{math}

    A $1$-path in $\Geps \setminus \BSet$ between any two vertices in
    $\IRX{n} \setminus \EBSet$ exists by \lemref{s_path}. 
\end{proof}

\section{Building a reliable spanner in \texorpdfstring{$\Re^d$}{Rd}}
\seclab{r:d:general}

\subsection{A first construction}

In the following, we assume that $\PS \subseteq [0,1)^d$ -- this can
be done by an appropriate scaling and translation of space. We use a
recent result of Chan \etal \cite{chj-lsota-18}, that introduced
\emph{locality-sensitive orderings}.  These orderings (which are the
same as total orders and are extensions of the $\mathcal{Z}$-order)
can be thought as an alternative to quadtrees and related structures.
For an ordering $\order$ of $[0,1)^d$, and two points
$\pp,\pq \in [0,1)^d$, such that $\pp \prec_\order \pq$, let
\begin{equation*}
    (\pp,\pq)_{\order} = \Set{\pz \in [0,1)^d}{ \pp \prec_\order \pz
       \prec_\order \pq}   
\end{equation*}
be the open interval between $\pp$ and $\pq$ in the order $\order$.
Further, let $\ballY{\pp}{r} = \Set{\pz \in [0,1)^d}{\dY{\pp}{\pz} \leq r}$ denote the ball centered at $\pp$ with radius $r$.

\begin{theorem}[\cite{chj-lsota-18}]
    \thmlab{lso}%
    For $\epsB \in (0,1)$, there is a set $\ordAll(\epsB)$ of at most
    $M(\epsB) = \Of( \epsB^{-d} \log \epsB^{-1} )$ orderings of
    $[0,1)^d$, such that for any two (distinct) points
    $\pp, \pq \in [0,1)^d$, with $\ell = \dY{\pp}{\pq}$, there is an
    ordering $\order \in \ordAll$, and a point $\pz \in [0,1)^d$, such
    that %
    \medskip%
    \begin{compactenumi}
        \item $\pp \prec_\order \pq$,

        \smallskip%
        \item
        $(\pp,\pz)_\order \subseteq \ballY{\pp}{\bigl.  \epsB \ell}$,

        \smallskip%
        \item
        $(\pz,\pq)_\order \subseteq \ballY{\pq}{\bigl.  \epsB \ell}$,
        and

        \smallskip%
        \item $\pz \in \ballY{\pp}{\bigl.  \epsB \ell}$ or
        $\pz \in \ballY{\pq}{\bigl.  \epsB \ell}$.
    \end{compactenumi}
    \medskip%
    Furthermore, given such an ordering $\order$, and two points
    $\pp,\pq$, one can compute their ordering, according to $\order$,
    using $\Of(d\log \epsB^{-1})$ arithmetic and bitwise-logical
    operations.
\end{theorem}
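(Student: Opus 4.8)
The plan is to prove \thmref{lso} by modifying the construction of Chan \etal\ \cite{chj-lsota-18} of \emph{locality-sensitive orderings}. Recall that their result gives a set of $\Of(\epsB^{-d}\log\epsB^{-1})$ orderings of $[0,1)^d$ such that for any two points $\pp,\pq$ there is an ordering in which every point lying between $\pp$ and $\pq$ is within distance $\epsB\dY{\pp}{\pq}$ of one of the two endpoints. The only thing we need beyond their statement is the ``pivot'' point $\pz$: we need an explicit point in the order that separates the part of the interval near $\pp$ from the part near $\pq$, and this pivot must itself be close to one of the endpoints. I would first recall that their orderings are built from shifted quadtrees: one takes $\Of(\epsB^{-d})$ grid shifts, and for each shift uses the $\Of(\log\epsB^{-1})$ orderings obtained by permuting the ordering of the $2^d$ children at each quadtree node (the $Z$-order variants). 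For a pair $\pp,\pq$ at distance $\ell$, the guarantee is that there is a shift for which both points lie in a common quadtree cell $\cell$ of diameter $\Of(\ell)$, while their parent cells at the next level down already separate them into different children, and the child-orderings can be chosen so that $\pp$'s child comes entirely before $\pq$'s child.

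\textbf{Constructing the pivot.} The key observation is that in the ordering $\order$, once we are inside the common cell $\cell$ with $\pp$ in child-cell $\cell_\pp$ and $\pq$ in a later child-cell $\cell_\pq$, the points strictly between $\pp$ and $\pq$ in $\order$ split naturally: those in $\cell_\pp$ (all appearing after $\pp$ within that child, hence within distance $\diamC(\cell_\pp) = \Of(\epsB\ell)$ of $\pp$ once the scale is chosen right), those in child-cells strictly between $\cell_\pp$ and $\cell_\pq$, and those in $\cell_\pq$ before $\pq$ (within $\Of(\epsB\ell)$ of $\pq$). The middle group is the obstruction — it need not be near either endpoint. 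So I would choose the scale of the decomposition finer: instead of guaranteeing a common cell of diameter $\Of(\ell)$, I pick the shifts and scales so that $\pp$ and $\pq$ land in a common cell $\cell$ of diameter $\Of(\epsB\ell)$. This is exactly the regime Chan \etal\ already handle — it just changes the constant in the number of shifts by a $\Of(\epsB^{-d})$ factor, which is already absorbed in $M(\epsB) = \Of(\epsB^{-d}\log\epsB^{-1})$. Then \emph{every} point of $[0,1)^d$ lying in $\cell$ is within $\Of(\epsB\ell)$ of both $\pp$ and $\pq$, and in particular every point strictly between $\pp$ and $\pq$ in $\order$ lies in $\cell$ (since $\pp$ and $\pq$ are both in $\cell$ and $\cell$ is an interval of the ordering $\order$), so it is within $\epsB\ell$ of, say, $\pp$. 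Now take $\pz = \pq$... no: take $\pz$ to be $\pp$ itself if one wants item (iv) trivially, but then $(\pz,\pq)_\order$ is the whole middle stretch and must be inside $\ballY{\pq}{\epsB\ell}$, which fails. Instead set $\pz$ to be the largest point of $\PS\cup\{\pp\}$ in $\order$ that still lies in $\ballY{\pp}{\epsB\ell}$; but since the entire stretch between $\pp$ and $\pq$ is in $\cell \subseteq \ballY{\pp}{\Of(\epsB\ell)}$, after rescaling $\epsB$ by the hidden constant we may simply take $\pz = \pp$, giving $(\pp,\pz)_\order = \emptyset \subseteq \ballY{\pp}{\epsB\ell}$, $(\pz,\pq)_\order = (\pp,\pq)_\order \subseteq \cell \subseteq \ballY{\pq}{\epsB\ell}$, and $\pz = \pp \in \ballY{\pp}{\epsB\ell}$. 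All four items hold. (The statement is phrased with a general $\pz$ precisely so that this degenerate but valid choice is allowed.)

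\textbf{The computational claim.} For the ``furthermore'' part, I would note that comparing two points in a shifted-quadtree $Z$-order reduces to: add the shift vector, find the most significant bit position where the two coordinates' binary expansions first differ across all $d$ coordinates (this is the level of the least common ancestor cell), and then compare the $d$-bit ``child indices'' at that level under the fixed permutation defining this particular ordering. Finding the split level costs $\Of(d)$ word operations plus $\Of(\log\epsB^{-1})$ bit inspections since we only need $\Of(\log\epsB^{-1})$ levels of resolution around the relevant scale; reading off and permuting the child index costs $\Of(d)$ more. This gives the stated $\Of(d\log\epsB^{-1})$ bound, and is exactly the bound Chan \etal\ establish for their orderings.

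\textbf{Main obstacle.} The one real subtlety — and the thing I would be most careful about — is ensuring that the pivot/pair $\pp,\pq$ really do land in a \emph{common} cell of diameter $\Of(\epsB\ell)$ for one of the chosen shifts, \emph{and} that within that cell the child ordering puts $\pp$ before $\pq$; the two requirements interact, because forcing a finer common cell means the ``next level down'' that separates $\pp$ from $\pq$ is at scale $\Theta(\epsB\ell/2^{\Of(1)})$, and we need a shift avoiding the $\Of(d)$ ``bad'' dyadic hyperplanes at \emph{that} scale within distance $\ell$, which is what the $\Of(\epsB^{-d})$ shifts buy us via a volume/union-bound argument (a random shift in a grid of side $\Theta(\ell)$ separated into $\Theta(\epsB^{-1})$ sub-slabs per axis misses all bad hyperplanes with constant probability). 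Getting the constants to line up so that $M(\epsB)=\Of(\epsB^{-d}\log\epsB^{-1})$ and all four items hold simultaneously with a clean single parameter $\epsB$ is the bookkeeping core of the proof; everything else is a direct quotation of \cite{chj-lsota-18}.
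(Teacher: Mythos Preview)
The paper does not prove this theorem; it is stated with attribution to \cite{chj-lsota-18} and used as a black box throughout \secref{r:d:general}. There is therefore no in-paper proof to compare your sketch against.

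That said, your sketch contains a fatal step. You write: ``pick the shifts and scales so that $\pp$ and $\pq$ land in a common cell $\cell$ of diameter $\Of(\epsB\ell)$.'' This is impossible: $\dY{\pp}{\pq}=\ell$, so any cell containing both points has diameter at least $\ell$, and $\epsB<1$. Everything after this collapses. In particular, the choice $\pz=\pp$ does not yield $(\pz,\pq)_\order\subseteq\ballY{\pq}{\epsB\ell}$: the open interval $(\pp,\pq)_\order$ may contain points at distance $\Theta(\ell)$ from $\pq$.

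Your first instinct was the correct one, and you abandoned it too quickly. In \cite{chj-lsota-18} the common ancestor cell of $\pp$ and $\pq$ has diameter $\Theta(\ell)$ (not $\Theta(\epsB\ell)$), and is then refined into sub-cells of diameter at most $\epsB\ell$; $\pp$ and $\pq$ land in distinct sub-cells $\cell_\pp,\cell_\pq$. The $\Of(\epsB^{-d}\log\epsB^{-1})$ orderings are precisely what is needed to guarantee that, for some ordering, $\cell_\pp$ and $\cell_\pq$ are \emph{consecutive} among the sub-cells --- so your ``middle group'' is empty. The pivot $\pz$ is then just the first point of $\cell_\pq$ in the order (or the last of $\cell_\pp$), and items (ii)--(iv) follow immediately because each sub-cell has diameter at most $\epsB\ell$. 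You correctly identified the obstruction (intermediate sub-cells) but reached for the wrong resolution (shrinking the common cell) instead of the one the multiplicity of orderings is designed to provide (adjacency).
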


First, we give a very simple construction and analysis, for building
reliable spanners, using the theorem above and the one-dimensional
construction. We present it to convey the basic principle of this
technique. Then, by tuning the parameters, we repeat the construction
to obtain a reliable spanner of size
$\Of\pth{n \log n (\log \log n)^{6}}$. This construction has a more
elaborate analysis, with a similar ideas but used in an iterative
manner.

\subsubsection{Construction in detail}
Given a set $\PS$ of $n$ points in $[0,1)^d$, and parameters
$\eps, \epsR \in (0,1)$, let $\epsB = \eps / (\constA \log n)$,
\begin{equation*}
    M%
    =%
    M(\epsB)%
    =%
    \Of(\epsB^{-d} \log \epsB^{-1} )%
    =%
    \Of\Bigl( \eps^{-d} \log^d n \log \frac{\log n}{\eps} \Bigr),
\end{equation*}
and $\constA$ be some sufficiently large constant. Next, let
$\epsR' = \epsR / M$, and let $\ordAll = \ordAll(\epsB)$ be the set of
orderings of \thmref{lso}. For each ordering $\order \in \ordAll$,
compute the $\epsR'$-reliable exact spanner $\Graph_\order$ of $\PS$,
see \thmref{useless_eps}, according to $\order$. Let $\Graph$ be the
resulting graph by taking the union of $\Graph_\order$ for all
$\order \in \ordAll$.

\subsubsection{Analysis}

\begin{lemma}
    \lemlab{inferior}%
    The graph $\Graph$, constructed above, is a $\epsR$-reliable
    $(1+\eps)$-spanner and has size
    \begin{equation*}
        \Of\Bigl( \eps^{-7d} \epsR^{-6} n \log^{7d} n \log^{7} \frac{\log
           n}{\eps}\Bigr).
    \end{equation*}
\end{lemma}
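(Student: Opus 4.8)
The plan is to verify two things about $\Graph$: the size bound and the $\epsR$-reliable $(1+\eps)$-spanner property. The size is an immediate multiplication: there are $M = M(\epsB)$ orderings, and for each one we build a $\epsR'$-reliable exact spanner $\Graph_\order$ via \thmref{useless_eps}, which has $\Of\pth{(\epsR')^{-6} n \log n}$ edges. Since $\epsR' = \epsR/M$, each such graph has $\Of\pth{M^6 \epsR^{-6} n \log n}$ edges, and summing over the $M$ orderings gives $\Of\pth{M^7 \epsR^{-6} n \log n}$. Substituting $M = \Of\pth{\eps^{-d}\log^d n \log\frac{\log n}{\eps}}$ and raising to the $7$th power yields the claimed $\Of\pth{\eps^{-7d}\epsR^{-6} n \log^{7d} n \log^7 \frac{\log n}{\eps}}$. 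This part is routine bookkeeping.

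\textbf{The reliability argument.} Fix a failure set $\BSet$ with $k = \cardin{\BSet}$. For each ordering $\order \in \ordAll$, \thmref{useless_eps} gives an expanded set $\EBSetX{\order} \supseteq \BSet$ with $\cardin{\EBSetX{\order}} \leq (1+\epsR')\cardin{\BSet} = (1+\epsR/M)k$, such that $\Graph_\order \setminus \BSet$ is an exact $1$-spanner (in the order $\order$) for all pairs outside $\EBSetX{\order}$. Take $\EBSet = \bigcup_{\order \in \ordAll} \EBSetX{\order}$. Then $\EBSet \supseteq \BSet$ and
\begin{equation*}
    \cardin{\EBSet} - \cardin{\BSet} \leq \sum_{\order \in \ordAll}\pth{\cardin{\EBSetX{\order}} - \cardin{\BSet}} \leq M \cdot \frac{\epsR}{M} k = \epsR k,
\end{equation*}
so $\cardin{\EBSet} \leq (1+\epsR)k$, which is the required loss bound.

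\textbf{The spanner property for surviving pairs} is the main substantive step. Take any $\pp, \pq \in \PS \setminus \EBSet$ with $\ell = \dY{\pp}{\pq}$. By \thmref{lso} there is an ordering $\order \in \ordAll$ and a splitting point $\pz$ with $\pp \prec_\order \pq$, such that $(\pp,\pz)_\order \subseteq \ballY{\pp}{\epsB\ell}$ and $(\pz,\pq)_\order \subseteq \ballY{\pq}{\epsB\ell}$. The idea is to walk from $\pp$ to $\pq$ along the $1$-path (in order $\order$) guaranteed by $\Graph_\order \setminus \BSet$ — which exists since $\pp,\pq \notin \EBSetX{\order}$. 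This $\order$-monotone path is exact for the one-dimensional "distance" $\order$ induces, but we must bound its \emph{Euclidean} length. Here I would use \lemref{s_path}: the one-dimensional $1$-path uses at most $2\log n$ edges. Each edge of the path connects two consecutive vertices in the order $\order$; by \thmref{lso}, any two consecutive points lying on the $\pp$-side are within $\ballY{\pp}{\epsB\ell}$, hence their Euclidean distance is at most $2\epsB\ell$, and similarly on the $\pq$-side, and the one "middle" edge crossing $\pz$ has Euclidean length at most $\ell + 2\epsB\ell$ by the triangle inequality. So the total Euclidean length of the path is at most $\ell + \Of(\log n)\cdot \epsB \ell = \ell(1 + \Of(\epsB\log n))$. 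Choosing $\constA$ large enough in $\epsB = \eps/(\constA \log n)$ makes $\Of(\epsB\log n) \leq \eps$, giving a $(1+\eps)$-path. I would take care to state precisely why every intermediate vertex on the path survives — it lies in $\PS \setminus \BSet$ by construction of $\Graph_\order \setminus \BSet$ — and note we only need $\pp,\pq \notin \EBSet$, not the intermediate vertices. \textbf{The main obstacle} is exactly this length accounting: ensuring the $2\log n$ hop bound from \lemref{s_path} is correctly combined with the per-edge diameter bound from \thmref{lso}, and that the $\log n$ in $\epsB$'s denominator is the right quantity (it must dominate the hop count, which is $\Of(\log n)$ since $n = \cardin{\PS}$, and this is why $\epsB$ is chosen with a $\log n$, not $\log\log n$, factor).
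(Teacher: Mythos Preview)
Your proposal is correct and follows essentially the same approach as the paper: the size bound via $M \cdot \Of\bigl((\epsR')^{-6} n \log n\bigr) = \Of\bigl(M^7 \epsR^{-6} n \log n\bigr)$, the harmed set $\EBSet$ as the union of the $\EBSetX{\order}$ over all $M$ orderings, and the $(1+\eps)$-path via the locality-sensitive ordering of \thmref{lso} combined with the $\Of(\log n)$-hop monotone path from \lemref{s_path}, with per-hop Euclidean length $\leq 2\epsB\ell$ except for the single hop straddling $\pz$. Your accounting of $\cardin{\EBSet}$ and your explicit remark that only $\pp,\pq$ (not the intermediate hops) need lie outside $\EBSet$ are, if anything, slightly more careful than the paper's own write-up.
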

\begin{proof}
    Given a (failure) set $\BSet \subseteq \PS$, let $\EBSet$ be the
    union of all the harmed sets resulting from $\BSet$ in
    $\Graph_{\order}$, for all $\order \in \ordAll$.  We have that
    $\cardin{\EBSet} \leq (1 + M \cdot \epsR') \cardin{\BSet} = (1 +
    \epsR) \cardin{\BSet}$.

    Consider any two points $\pp, \pq \in \PS \setminus \EBSet$. By
    \thmref{lso}, for $\ell = \dY{\pp}{\pq}$, there exists an ordering
    $\order \in \ordAll$, and a point $\pz \in [0,1)^d$, such that
    $(\pp,\pz)_\order \subseteq \ballY{\pp}{\epsB \ell}$ and
    $(\pz,\pq)_\order \subseteq \ballY{\pq}{\epsB \ell}$ (and $\pz$ is
    in one of these balls).

    By \lemref{s_path}, the graph
    $\Graph_\order \setminus \BSet \subseteq \Graph \setminus \BSet$
    contains a monotone path $\pi$, according to $\order$, with
    $h = \Of( \log n)$ hops, connecting $\pp$ to $\pq$. Let
    $\pp = \pp_1, \ldots,\pp_{h+1} = \pq$ be this path. Observe that
    there is a unique index $i$, such that
    $\pz \in (\pp_i,\pp_{i+1})$. We have the following: \smallskip%
    \begin{compactenumA}
        \item $\forall j \neq i$\quad
        $\dY{\pp_j}{\pp_{j+1}} \leq 2\epsB \ell$, since $\pp_j$ and $\pp_{j+1}$ are contained in a ball of radius $\epsB \ell$.

        \smallskip%
        \item
        $\dY{\pp_i}{\pp_{i+1}} \leq \dY{\pp_i}{\pp} + \dY{\pp}{\pq} +
        \dY{\pq}{\pp_{i+1}} \leq \ell + 2\epsB\ell$.
    \end{compactenumA}
    \smallskip%
    As such, the total length of $\pi$ is
    $\sum_{j=1}^h \dY{\pp_j}{\pp_{j+1}} = (1+ 2\epsB h)\ell \leq
    (1+\eps)\ell$, as desired, if $\constA$ is sufficiently large.
    Namely, $\Graph$ is the desired reliable spanner.

    The number of edges of $\Graph$ is
    \begin{equation*}
      M \cdot \Of\pth{ (\epsR')^{-6} n \log n}
      =%
      \Of\pth{ M (M/\epsR )^6 n \log n}%
      =%
      \Of\Bigl( \eps^{-7d} \epsR^{-6}  n \log^{7d} n \log^{7}
      \frac{\log n}{\eps} \Bigr).        
    \end{equation*}
\end{proof}

\subsection{An improved construction}
Given a set $\PS$ of $n$ points in $[0,1)^d$, and parameters
$\eps, \epsR \in (0,1)$, let $\epsB = \eps / \constA $,
\begin{equation*}
    M%
    =%
    M(\epsB)%
    =%
    \Of(\epsB^{-d} \log \epsB^{-1} )%
    =%
    \Of\bigl( \eps^{-d} \log \eps^{-1}\bigr),
\end{equation*}
and $\constA$ be some sufficiently large constant. Next, let
$\epsR' = \epsR / (3N \cdot M)$ where $N=\ceil{\log\log n}+1$, and let
$\ordAll = \ordAll(\epsB)$ be the set of orderings of
\thmref{lso}. For each ordering $\order \in \ordAll$, compute the
$\epsR'$-reliable exact spanner $\Graph_\order$ of $\PS$, see
\thmref{useless_eps}, according to $\order$. Let $\Graph$ be the
resulting graph by taking the union of $\Graph_\order$ for all
$\order \in \ordAll$.

\begin{theorem}
    \thmlab{main}%
    The graph $\Graph$, constructed above, is a $\epsR$-reliable
    $(1+\eps)$-spanner and has size
    \begin{equation*}
        \Of\Bigl( \eps^{-7d} \log^7 \frac{1}{\eps} \cdot \epsR^{-6}
        n \log n (\log \log n)^{6}\Bigr).
    \end{equation*}
\end{theorem}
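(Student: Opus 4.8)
The plan is to keep almost exactly the graph of \lemref{inferior} — the union $\Graph=\bigcup_{\order\in\ordAll(\epsB)}\Graph_\order$ of the one‑dimensional $\epsR'$-reliable exact spanners of \thmref{useless_eps}, one per ordering of \thmref{lso} — but now with $\epsB=\eps/\constA$ a \emph{constant} fraction of $\eps$ rather than $\eps/(\constA\log n)$. Granting that this graph works, the size bound is pure arithmetic: $\Graph$ is the union of $M=\Of(\eps^{-d}\log\eps^{-1})$ pieces, each with $\Of\pth{(\epsR')^{-6}n\log n}$ edges by \thmref{useless_eps}, and since $\epsR'=\epsR/(3NM)$ with $N=\ceil{\log\log n}+1$ we have $(\epsR')^{-6}=\Of(N^6M^6\epsR^{-6})$; multiplying by the $M$ pieces gives $\Of\pth{M^7N^6\epsR^{-6}n\log n}=\Of\pth{\eps^{-7d}\log^7(1/\eps)\cdot\epsR^{-6}n\log n(\log\log n)^6}$, as claimed.

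For reliability, fix a failure set $\BSet$ and let $\EBSet$ be the union, over all $M$ orderings $\order\in\ordAll(\epsB)$ (and, as the recursion below requires, over the $\Of(N)$ scales it uses), of the corresponding $(1-\epsR'/4)$-shadows of $\BSet$; these are exactly the harmed sets of the graphs $\Graph_\order$, by the proof of \thmref{useless_eps}. By \lemref{shadow_eps} each such shadow exceeds $\BSet$ by at most $\Of(\epsR')\cardin{\BSet}$ points, and there are $\Of(NM)$ of them, so the choice $\epsR'=\epsR/(3NM)$ yields $\cardin{\EBSet}\le(1+\epsR)\cardin{\BSet}$, which is exactly the loss we are allowed.

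The substance is showing that $\Graph\setminus\BSet$ is a $(1+\eps)$-spanner for $\PS\setminus\EBSet$. Because $\epsB$ is only a constant fraction of $\eps$, we can no longer route $\pp$ to $\pq$ along a single $\Of(\log n)$-hop monotone path of some $\Graph_\order$ (that would cost $\Theta(\epsB\log n)\ell$ in detour), so we recurse on scale. Given $\pp,\pq\in\PS\setminus\EBSet$ with $\ell=\dY{\pp}{\pq}$, apply \thmref{lso} to obtain an ordering $\order$ and a transition point $\pz$, say with $\pz\in\ballY{\pp}{\epsB\ell}$. Inside $\Graph_\order\setminus\BSet$, which is a $1$-spanner for $\PS\setminus\EBSet$, take a monotone path $\pp=\pp_1\to\cdots\to\pp_{h+1}=\pq$ and let $(\pp_i,\pp_{i+1})$ be its unique hop straddling $\pz$ in $\order$; then $\dY{\pp}{\pp_i}\le\epsB\ell$, $\dY{\pp_{i+1}}{\pq}\le\epsB\ell$, and the single edge $\pp_i\pp_{i+1}\in\Graph\setminus\BSet$ has length at most $(1+2\epsB)\ell$. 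We keep this one edge and recurse on the two residual pairs $(\pp,\pp_i)$ and $(\pp_{i+1},\pq)$, whose separations have shrunk by the factor $\epsB<1$; after $N=\Theta(\log\log n)$ levels the $2^N$ residual pairs are at separation at most $\epsB^N\ell$ and each is finished directly by one monotone path of some $\Graph_{\order'}\setminus\BSet$. Summing edge lengths, the straddle edges over all $N$ levels form a geometric series bounded by $(1+\Of(\epsB))\ell$, while the $2^N$ bottom paths, each of $\Of(\log n)$ hops of length $\Of(\epsB\cdot\epsB^N\ell)$, contribute $\Of\pth{(2\epsB)^N\log n}\ell$; with $N=\ceil{\log\log n}+1$ and $\epsB$ small this last term is also $\Of(\epsB)\ell$, so the total is $(1+\Of(\epsB))\ell\le(1+\eps)\ell$ once $\constA$ is large enough.

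The hard part is reconciling the recursion with reliability: each residual pair handed to a recursive call needs \emph{both} endpoints outside $\EBSet$ so that \thmref{useless_eps} applies again, yet the straddle endpoint $\pp_i$ is only guaranteed to be a descendant of $\pp$ in $\Graph_\order\setminus\BSet$, not to avoid the shadow. The fix is to not use the literal straddle hop but to choose $\pp_i$ among the points of $\PS$ lying just before $\pz$ in $\order$ that are simultaneously reachable from $\pp$ by a monotone path in $\Graph_\order\setminus\BSet$ \emph{and} outside $\EBSet$; such points are plentiful because the reachable‑descendant set near the transition point is large — this is precisely the expansion supplied by \lemref{expand_r} — and $\EBSet$ was deliberately inflated, one ``$\epsR'$ worth'' of slack per recursion level, which is the role of the factor $3N$ in $\epsR'$, so the budget survives all $N$ levels. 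Verifying that such a choice can be made simultaneously at every level, and that the concatenated path is a valid $(1+\eps)$-path of the stated length, is the bulk of the work.
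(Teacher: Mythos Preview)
Your overall architecture matches the paper's: keep $\epsB=\eps/\constA$ constant, take one straddle edge per level, recurse on the two short residual pairs for $N=\Theta(\log\log n)$ levels, and finish the $2^{N-1}$ leaf pairs with full monotone paths. The size bound and the path-length accounting are both right.

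The gap is in your definition of $\EBSet$ and the density argument you propose for choosing the straddle endpoints. You define $\EBSet$ as a \emph{flat} union of the $(1-\epsR'/4)$-shadows of $\BSet$ over the $M$ orderings (and some unspecified ``$N$ scales''). You then want to pick $\pp_i$ reachable from $\pp$ in $\Graph_\order\setminus\BSet$ \emph{and} outside $\EBSet$, arguing via \lemref{expand_r} that reachable points are an $\Omega(\epsR')$-fraction of the relevant interval while $\EBSet$ is globally only $(1+\epsR)k$ points. But the global size of $\EBSet$ gives no control on its \emph{local} density in an interval of $\order$: the shadow contributions from the other $M-1$ orderings can, under $\order$, concentrate arbitrarily near $\pp$. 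So the pigeonhole you need --- reachable set minus $\EBSet$ nonempty --- does not follow from what you have set up, and you correctly flag that ``verifying that such a choice can be made \ldots\ is the bulk of the work.''

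The paper sidesteps this entirely with an \emph{iterated} shadow. Define $\BSet_1$ to be the union over all $M$ orderings of the $(1-\epsR'/4)$-shadows of $\BSet$, then $\BSet_i$ the union of the shadows of $\BSet_{i-1}$, and set $\EBSet=\BSet_N$. By \lemref{shadow_eps}, $\cardin{\BSet_i}\le(1+\epsR' M)\cardin{\BSet_{i-1}}=(1+\epsR/(3N))\cardin{\BSet_{i-1}}$, so $\cardin{\EBSet}\le(1+\epsR/(3N))^N\cardin{\BSet}\le(1+\epsR)\cardin{\BSet}$ --- this is where the factor $3N$ in $\epsR'$ is actually spent. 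Now the recursion is clean: if $\pp,\pq\notin\BSet_N$, then in particular they are outside the $\order$-shadow of $\BSet_{N-1}$ for \emph{every} $\order$, so \lemref{s_path} gives a monotone path in $\Graph_\order\setminus\BSet_{N-1}$; every vertex of that path, and in particular the straddle endpoints $\pp',\pq'$, lies in $\PS\setminus\BSet_{N-1}$. The two residual pairs $(\pp,\pp')$ and $(\pq,\pq')$ therefore have both endpoints outside $\BSet_{N-1}$, and the recursion descends one shadow layer per level. After $N-1$ levels the leaf pairs are in $\PS\setminus\BSet_1$ and a single monotone path avoiding $\BSet$ finishes them. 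No density or selection argument is needed: the nesting $\BSet\subseteq\BSet_1\subseteq\cdots\subseteq\BSet_N$ does all the work.
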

\begin{proof}
    First, we show the bound on the size. There are $M$ different
    orderings for which we build the graph of
    \thmref{useless_eps}. Each of these graphs has
    $\Of\pth{ (\epsR')^{-6} n \log n}$ edges. Therefore, the size of
    $\Graph$ is
    \begin{align*}
      M \cdot \Of\pth{ (\epsR')^{-6} n \log n}
      &=%
        \Of\Bigl( M \Bigl( \frac{3N \ts M}{ \epsR } \Bigr)^6 n \log n\Bigr)%
      =%
        \Of\Bigl( \eps^{-7d} \log^7 \frac{1}{\eps} \cdot \epsR^{-6}
        n \log n (\log \log n)^{6}\Bigr).
    \end{align*}
    Next, we identify the set of harmed vertices $\EBSet$ given a set
    of failed vertices $\BSet \subseteq \PS$. First, let $\BSet_1$ be
    the union of all the $(1-\epsR'/4)$-shadows resulting from $\BSet$
    in $\Graph_{\order}$, for all $\order \in \ordAll$. Then, for
    $i = 2,\dots,N$, we define $\BSet_i$ in a recursive manner to be
    the union of all the $(1-\epsR'/4)$-shadows resulting from
    $\BSet_{i-1}$ in $\Graph_{\order}$, for all $\order \in
    \ordAll$. We set $\EBSet = \BSet_N$.

    By the recursion and \lemref{shadow_eps} we have that
    \begin{align*}
      \cardin{\BSet_i}%
      &\leq
        \Bigl(\frac{\cardin{\BSet_{i-1}}}{(2(1-\epsR'/4)-1)} -
        \cardin{\BSet_{i-1}}\Bigr)M + \cardin{\BSet_{i-1}}%
      \\&%
      =%
        \frac{\cardin{\BSet_{i-1}} -
        (1-\epsR'/2)\cardin{\BSet_{i-1}}}{(1 -
        \epsR'/2 )}M + \cardin{\BSet_{i-1}}%
      = %
      \frac{\epsR'\cardin{\BSet_{i-1}}}{(2 - \epsR')}M + \cardin{\BSet_{i-1}}%
      \\&%
      \leq%
      (1+\epsR'M)\cardin{\BSet_{i-1}}%
      =%
      \left(1+\frac{\epsR}{3N}\right)\cardin{\BSet_{i-1}}.
    \end{align*}
    Therefore, we obtain
    \begin{equation*}
        \cardin{\EBSet} = \cardin{\BSet_N}%
        \leq%
        \pth{1+\frac{\epsR}{3N}}^N  \cardin{\BSet}%
        \leq%
        \exp \pth{N \frac{\epsR}{3N}}  \cardin{\BSet}%
        \leq%
        (1 + \epsR) \cardin{\BSet},
    \end{equation*}
    using $1+x \leq e^x \leq 1+3x$, for $x\in[0,1]$.

    The claim is that there is a $(1+\eps)$-path $\pout$ between any
    pair of vertices
    $\pp,\pq\in \PS\setminus \EBSet \equiv \PS\setminus\BSet_N$ such
    that the path $\pout$ does not use any vertices of $\BSet$.  By
    \thmref{useless_eps}, for the right choice of $\order$, the graph
    $\Graph_\order \setminus \BSet_{N-1} \subseteq \Graph \setminus
    \BSet_{N-1}$ contains a monotone path connecting $\pp$ to $\pq$,
    according to $\order$. Observe that there is a unique edge
    $(\pp',\pq')$ on this path that ``jumps'' from the locality of
    $\pp$ to the locality of $\pq$. Formally, we have the following: %
    \medskip%
    \begin{compactenumA}
        \item
        \begin{math}
            \dY{\pp'}{\pq'}%
            \leq%
            \dY{\pp}{\pq} + 2\epsB \dY{\pp}{\pq} =%
            (1 + 2 \eps/\constA) \dY{\pp}{\pq}.
        \end{math}

        \smallskip%
        \item
        $\dY{\pp}{\pp'} \leq 2\epsB\dY{\pp}{\pq} =
        2(\eps/\constA)\dY{\pp}{\pq} $, and similarly

        $\dY{\pq}{\pq'} \leq 2(\eps/\constA)\dY{\pp}{\pq} $.

        \smallskip%
        \item $\pp',\pq'\in \PS\setminus\BSet_{N-1}$.
    \end{compactenumA}
    \medskip%
    We fix the edge $(\pp',\pq')$ to be used in the computed path
    $\pout$ connecting $\pp$ to $\pq$. We still need to build the two
    parts of the path $\pout$ between $\pp,\pp'$ and $\pq,\pq'$.

    This procedure reduced the problem of computing a reliable path
    between two points $\pp,\pq \in \PS\setminus\BSet_N$, into
    computing two such paths between two points of
    $\PS\setminus\BSet_{N-1}$ (i.e., $\pp, \pp'$ and $\pq, \pq'$). The
    benefit here is that both $\dY{\pp}{\pp'}$ and $\dY{\pq}{\pq'}$
    are much smaller than $\dY{\pp}{\pq}$.  We now repeat this
    refinement process $N-1$ times.

    To this end, let $Q_i$ be the set of active pairs that needs to be
    connected in the $i$\th level of the recursion. Thus, we have that
    $Q_0 = \{(\pp,\pq) \},\; Q_1=\{(\pp,\pp'), (\pq,\pq')\}$, and so
    on. We repeat this $N-1$ times. In the $i$\th level there are
    $\cardin{Q_i}=2^i$ active pairs. Let $(x,y)\in Q_i$ be such a
    pair. Then, there is an edge $(x',y')$ in the graph
    $\Graph\setminus \BSet_{N-(i+1)}$, such that we have the
    following: %
    \medskip%
    \begin{compactenumA}
        \item
        \begin{math}
            \dY{x'}{y'} \leq \dY{x}{y} (1 + 2 \eps/\constA) \leq (2
            \eps/\constA)^i (1 + 2 \eps/\constA) \dY{\pp}{\pq}.
        \end{math}

        \smallskip%
        \item
        $\dY{x}{x'} \leq 2(\eps/\constA)\dY{x}{y} \leq
        (2\eps/\constA)^{i+1}\dY{\pp}{\pq} $, and
        
        $\dY{y}{y'} \leq (2\eps/\constA)^{i+1}\dY{\pp}{\pq} $.

        \smallskip%
        \item $x',y'\in \PS\setminus\BSet_{N-(i+1)}$.
    \end{compactenumA}
    \medskip%
    The edge $(x',y')$ is added to the path $\pout$. After $N-1$
    iterations the set of active pairs is $Q_{N-1}$ and for each pair
    $(x,y)\in Q_{N-1}$ we have that $x,y\in \PS\setminus\BSet_1$. For
    each of these pairs $(x,y)\in Q_{N-1}$ we apply \thmref{lso} and
    \thmref{useless_eps} to obtain a path of length at most
    $\dY{x}{y} 2\log n$ between $x$ and $y$ (and this subpath of
    course does not contain any vertex in $\BSet$). We add all these
    subpaths to connect the active pairs in the path $\pout$, which
    completes $\pout$ into a path.

    Now, we bound the length of path $\pout$. Since, for all
    $(x,y)\in Q_{N-1}$, we have
    $\dY{x}{y} \leq \dY{p}{q} \cdot (2\eps/\constA)^{N-1}$ and
    $\cardin{Q_{N-1}} = 2^{N-1}$, the total length of the subpaths
    calculated, in the last step, is
    \begin{align*}
      &
        \sum_{(x,y) \in Q_{N-1}} \textrm{length}\pth{ \bigl. \pout[ x, y] }
        \leq%
        2^{N-1} \dY{p}{q} \cdot \Bigl(\frac{2\eps}{\constA}\Bigr)^{N-1} 2\log n
      \\&%
      \qquad%
      \leq %
      \dY{p}{q} \cdot \Bigl(\frac{4\eps}{\constA}\Bigr)^{\log\log n} 2\log n
      \leq%
      \dY{p}{q} \cdot \eps^{\log\log n} \Bigl(\frac{4}{\constA}\Bigr)^{\log\log n}
      2\log n
      \\&%
      \qquad
      \leq %
      \dY{p}{q} \cdot \frac{\eps}{4}\cdot \frac{1}{\log n}\cdot 2\log n
      =%
      \frac{\eps}{2}\dY{p}{q} ,
    \end{align*}
    for large enough $n$ and $\constA\geq 8$.  The total length of the
    long edges added to $\pout$ in the recursion, is bounded by
    \begin{align*}
      &
        \sum_{i=0}^{N-2} 2^i \dY{p}{q} \Bigl(\frac{2\eps}{\constA}\Bigr)^i
        \Bigl(1 + \frac{2\eps}{\constA}\Bigr)
        \leq %
        \dY{p}{q} \Bigl(1 + \frac{2\eps}{\constA}\Bigr) \sum_{i=0}^{\infty}
        \Bigl(\frac{4\eps}{\constA}\Bigr)^i
      \\&\qquad\qquad%
      = %
      \dY{p}{q} \Bigl(1 + \frac{2\eps}{\constA}\Bigr)
      \frac{1}{1-{4\eps}/{\constA}}
      =%
      \dY{p}{q} \Bigl(1 + \frac{6\eps}{\constA - 4\eps}\Bigr)
      \\&\qquad\qquad%
      \leq%
      \pth{1 + \frac{\eps}{2}}\dY{p}{q},
    \end{align*}
    which holds for $\constA \geq 16$. Therefore, the computed path
    $\pout$ between $\pp$ and $\pq$ is a $(1+\eps)$-path in
    $\Graph\setminus\BSet$, which concludes the proof of the theorem.
\end{proof}

\section{Construction for points with bounded %
   spread in \texorpdfstring{$\Re^d$}{Rd}}
\seclab{r:d:bounded:spread}

The input is again a set $\PS\subset \Re^d$ of $n$ points, and
parameters $\epsR \in (0,1/2)$ and $\eps \in (0,1)$. The goal is to
build a $\epsR$-reliable $(1+\eps)$-spanner on $\PS$ that has optimal
size under the condition that the spread $\Phi(\PS)$ is bounded by a
polynomial of $n$.

\subsection{Preliminaries}

\begin{definition}
    For a point set $\PS \subseteq \Re^d$, let
    $\diamX{\PS}= \max_{\pp,\pq\in \PS} \dY{\pp}{\pq}$ denote the
    \emphi{diameter} of $\PS$.  Let
    \begin{math}
        \CPX{\PS} = \min_{\pp,\pq \in \PS, \pp \neq \pq} \dY{\pp}{\pq}
    \end{math}
    denote the \emphi{closest pair} distance in $\PS$.  Furthermore,
    let
    \begin{math}
        \SpreadC(\PS) = \diamX{\PS}/\CPX{\PS}%
    \end{math}
    be the \emphi{spread} of $\PS$.
\end{definition}

\begin{definition}
    Let $s>0$ be a real number and let $\PB$ and $\PC$ be sets of
    points in $\Re^d$. The sets $\PB$ and $\PC$ are
    \emphi{$s$-separated} if
    $\distSetY{\PB}{\PC} \geq s \cdot
    \max\pth{\diamX{\PB},\diamX{\PC}}$, where
    $\distSetY{\PB}{\PC} = \min_{\pp \in \PB, \pq \in \PC}
    \dY{\pp}{\pq}$.
\end{definition}

\begin{definition}
    Let $\PS$ be a set of $n$ points in the plane and let $s>0$ be a
    real number. An \emph{$s$-well-separated pair decomposition}
    (\emphi{$s$-\WSPD}) of $\PS$ is a collection
    $\WS = \{(\PB_i,\PC_i)\}_{i=1}^m$ of pairs of subsets of $\PS$,
    such that:
    \begin{compactenumI}
        \smallskip
        \item The sets $\PB_i$ and $\PC_i$ are $s$-separated, for all
        $i=1,2,\dots,m$, and \smallskip%

        \item For any $\pp,\pq \in \PS$ ($\pp\neq \pq$) there exists a
        unique pair $(\PB_i,\PC_i) \in \WS$), such that
        $\pp \in \PB_i$ and $\pq\in \PC_i$ (or $\pq\in \PB_i$ and
        $\pp\in \PC_i$).
    \end{compactenumI}
\end{definition}

The well-separated pairs decomposition was introduced by Callahan and
Kosaraju \cite{ck-dmpsa-95}. The \emphi{size} of a \WSPD is the number
of pairs $m$, and the \emphi{weight} of a pair decomposition
$\mathcal{W}$ is defined as
$\omega(\mathcal{W}) = \sum_{i=1}^m \pth{\cardin{\PB_i} +
   \cardin{\PC_i}}$.

There are several ways to compute an $s$-\WSPD. Here, we use a
quadtree-based approach, which has important properties that we can
exploit. More precisely, we use the following result of Abam and
Har-Peled \cite[Lemma~2.8]{ah-ncsa-12} for computing a \WSPD.
\begin{lemma}
    \lemlab{WSPD-comp}%
    Let $\PS$ be a set of $n$ points in $\Re^d$, with spread
    $\Phi=\Phi(\PS)$, and let $\eps>0$ be a parameter. Then, one can
    compute an $\eps^{-1}$-\WSPD for $\PS$ of total weight
    $\Of(n\eps^{-d} \log{\Phi})$. Furthermore, any point of $\PS$
    participates in at most $\Of(\eps^{-d} \log{\Phi})$ pairs.
\end{lemma}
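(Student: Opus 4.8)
The plan is to build a quadtree on $\PS$, run the standard Callahan--Kosaraju recursive well-separated pair decomposition procedure on it with separation parameter $s = \eps^{-1}$, and then bound the size and weight by a packing argument. The extra $\log\Phi$ factor (compared to the $\Of(n\eps^{-d})$ achievable with a fair-split tree) is exactly the price of using an unbalanced quadtree: a point can fail to be well-separated from nearby cells at $\Of(\log\Phi)$ different scales.

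\textbf{Building the tree and the decomposition.} After rescaling so that $\diamX{\PS} = \Theta(1)$ and $\CPX{\PS} = \Theta(1/\Phi)$, build the (compressed) quadtree $\Tree$ of $\PS$; its relevant cells lie in $\Of(\log\Phi)$ levels, every point lies in exactly one cell per level, and hence in $\Of(\log\Phi)$ cells, so $\Tree$ has $\Of(n\log\Phi)$ nonempty cells. Run the recursive procedure on pairs of distinct cells, starting from the pair (root, root): if the point sets $P_u$ and $P_v$ of the two cells are $s$-separated, output $(P_u, P_v)$; otherwise, assuming without loss of generality $\diamX{u} \ge \diamX{v}$, recurse on $(u', v)$ for every child $u'$ of $u$. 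The standard argument shows this terminates and yields a valid $s$-\WSPD: for each $\{\pp,\pq\}$ the recursion subdivides along the unique refinement path and stops exactly once, giving the required unique covering pair. Because one always splits the cell of larger diameter, an easy invariant gives that whenever $(u,v)$ is output the two cells are within one quadtree level, i.e.\ $\diamX{u}/2 \le \diamX{v} \le 2\diamX{u}$.

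\textbf{The packing bound.} This is the crux. Fix a cell $u$ of $\Tree$ with parent $\hat u$. If $(u,v)$ is output, then at the previous level the pair $(\hat u, v')$ was examined and found \emph{not} $s$-separated, where $v'$ is $v$ or the parent of $v$; by the level invariant $\diamX{\hat u}$ and $\diamX{v'}$ agree up to a factor $2$, and failure of $s$-separation gives $\distSetY{\hat u}{v'} < s\max(\diamX{\hat u},\diamX{v'}) = \Of(\eps^{-1}\diamX{\hat u})$. Thus $v'$ is a cell of diameter $\Theta(\diamX{\hat u})$ inside a ball of radius $\Of(\eps^{-1}\diamX{\hat u})$ around $\hat u$; quadtree cells at a fixed scale are interior-disjoint of volume $\Theta(\diamX{\hat u}^d)$, so there are $\Of(\eps^{-d})$ choices for $v'$ and hence $\Of(\eps^{-d})$ choices for $v$. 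Therefore each fixed cell participates in $\Of(\eps^{-d})$ output pairs. Consequently a point $\pp \in \PS$, lying in $\Of(\log\Phi)$ cells, participates in $\Of(\eps^{-d}\log\Phi)$ pairs, proving the ``furthermore'' clause; summing over all $n$ points and using $\omega(\WS) = \sum_i (\cardin{\PB_i} + \cardin{\PC_i}) = \sum_{\pp \in \PS} \#\{i : \pp \in \PB_i \cup \PC_i\}$ gives $\omega(\WS) = \Of(n\eps^{-d}\log\Phi)$.

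\textbf{Main obstacle.} The delicate point is making the packing argument rigorous for a \emph{compressed} quadtree, where a single compressed edge skips many levels and the ``within one level'' invariant needs reinterpretation: one either argues on the uncompressed conceptual quadtree (accepting the $\Of(n\log\Phi)$ cell count directly) or checks that each compressed node is still charged only $\Of(\eps^{-d})$ times per scale it spans, which recovers the same total. The packing estimate and the level invariant themselves are routine once the tree model is fixed.
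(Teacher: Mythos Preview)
The paper does not prove this lemma; it is quoted from Abam and Har-Peled \cite[Lemma~2.8]{ah-ncsa-12} and used as a black box. Your sketch reproduces the standard quadtree-based argument---build the quadtree, run the Callahan--Kosaraju recursion always splitting the larger cell, and use a volume-packing bound to show each cell participates in $\Of(\eps^{-d})$ output pairs---which is essentially what the cited reference does, and it is correct.

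Two minor remarks. First, since the spread is bounded, the \emph{uncompressed} quadtree already has depth $\Of(\log\Phi)$, so there is no need for compression and the ``main obstacle'' you raise in the final paragraph simply does not arise in this setting. Second, your description of the parent pair of an output pair $(u,v)$ is slightly off: the pair that was split to produce $(u,v)$ is either $(\hat u, v)$ or $(u,\hat v)$, depending on which side was larger, rather than always $(\hat u, v')$; the packing argument goes through identically in either case, so this does not affect the conclusion.
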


\subsection{The construction of \texorpdfstring{$\GSB$}{G phi}}
\seclab{r:d:bounded:spread:constr} First, compute a quadtree $\Tree$
for the point set $\PS$. For any node $v\in \Tree$, let $\cellX{v}$
denote the \emphi{cell} (i.e.  square or cube, depending on the
dimension) represented by $v$. Let $\PcellX{v} = \cellX{v} \cap \PS$
be the point set stored in the subtree of $v$. Compute a
$({6}/{\eps})$-\WSPD $\WS$ of $\PS$, such that each pair consists of
two cells of $\Tree$, using \lemref{WSPD-comp}. The pairs in $\WS$ can
be represented by pairs of nodes $\{u,v\}$ of the quadtree
$\Tree$. Note that the algorithm of \lemref{WSPD-comp} uses the
diameters and distances of the cells of the quadtree, that is, for a
pair $\{u,v\}\in\WS$, we have
\begin{equation*}
    (6/\eps)\cdot \max\pth{ \bigl.\diamX{\cellX{u}},\diamX{\cellX{v}}}%
    \leq%
    \distSetY{\cellX{u}}{\cellX{v}}.
\end{equation*}

For any pair $\{u,v\}\in \WS$, we build the bipartite expander of
\lemref{expander} on the sets $\PcellX{u}$ and $\PcellX{v}$ such that
the expander property holds with $\epsA=\epsR/8$.  Furthermore, for
every two node $u$ and $v$ that have the same parent in the quadtree
$\Tree$ we add the edges of the bipartite expander of
\lemref{expander} between $\PcellX{u}$ and $\PcellX{v}$.  Let $\GSB$
be the resulting graph when taking the union over all the sets of
edges generated by the above.

\subsection{Analysis}

\begin{lemma}
    \lemlab{G_2d-size}%
    The graph $\GSB$ has
    $\Of\bigl(\epsA^{-2} \eps^{-d}n\log{\Phi(\PS)} \bigr)$ edges.
\end{lemma}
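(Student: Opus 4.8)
The plan is to split the edges of $\GSB$ into the two groups created by the construction — one bipartite expander per \WSPD{} pair $\{u,v\}\in\WS$, and one bipartite expander per pair of sibling cells of the quadtree $\Tree$ — and to bound each group using \lemref{expander} together with a weight/size counting argument, then add the two bounds.

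For the \WSPD{} group, \lemref{expander} (applied with parameter $\epsA$) says that the expander built on $\PcellX{u}$ and $\PcellX{v}$ has $\Of\bigl(\epsA^{-2}(\cardin{\PcellX{u}}+\cardin{\PcellX{v}})\bigr)$ edges. Summing over all pairs of $\WS$, this group has $\Of(\epsA^{-2}\,\omega(\WS))$ edges, where $\omega(\WS)=\sum_{\{u,v\}\in\WS}(\cardin{\PcellX{u}}+\cardin{\PcellX{v}})$ is the weight of the decomposition. Since $\WS$ is a $(6/\eps)$-\WSPD, \lemref{WSPD-comp} applied with its parameter set to $\eps/6$ gives $\omega(\WS)=\Of(n\eps^{-d}\log\Phi(\PS))$ (the $6^{d}$ factor being absorbed into the $d$-dependence of the $\Of$, as elsewhere in the paper). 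Hence this group contributes $\Of(\epsA^{-2}\eps^{-d}n\log\Phi(\PS))$ edges.

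For the sibling group, fix an internal node $p$ of $\Tree$. It has at most $2^{d}$ children, and by \lemref{expander} the expander on a pair of children $\{u,v\}$ has $\Of\bigl(\epsA^{-2}(\cardin{\PcellX{u}}+\cardin{\PcellX{v}})\bigr)$ edges. Each child of $p$ appears in at most $2^{d}-1$ sibling pairs, and the children's point sets partition $\PcellX{p}$, so the sibling expanders hanging off $p$ have $\Of(\epsA^{-2}2^{d}\cardin{\PcellX{p}})$ edges in total. Summing over all internal $p$, and using that a point $x\in\PS$ lies in $\PcellX{v}$ exactly for the ancestors $v$ of its leaf — of which there are $\Of(\log\Phi(\PS))$, since for a point set of spread $\Phi(\PS)$ the quadtree $\Tree$ has depth $\Of(\log\Phi(\PS))$ — we get $\sum_{v\in\Tree}\cardin{\PcellX{v}}=\Of(n\log\Phi(\PS))$. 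Thus this group contributes $\Of(\epsA^{-2}2^{d}n\log\Phi(\PS))$ edges, and adding the two estimates (again folding the $2^{\Of(d)}$ factor into the $\eps^{-d}$ term) yields the stated bound $\Of\bigl(\epsA^{-2}\eps^{-d}n\log\Phi(\PS)\bigr)$.

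The only mildly delicate step is the estimate $\sum_{v\in\Tree}\cardin{\PcellX{v}}=\Of(n\log\Phi(\PS))$ used for the sibling group: it rests on the quadtree having height $\Of(\log\Phi(\PS))$, so that each point is charged to only $\Of(\log\Phi(\PS))$ cells and the per-node contributions telescope. Everything else is a routine combination of \lemref{expander}, \lemref{WSPD-comp}, and the partition property of quadtree children.
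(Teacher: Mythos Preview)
Your proof is correct and follows essentially the same approach as the paper's: both split the edges into the \WSPD{} expanders and the sibling expanders, invoke \lemref{expander} for the per-pair edge count, use \lemref{WSPD-comp} (weight / per-point participation) for the \WSPD{} contribution, and use the $\Of(\log\Phi(\PS))$ quadtree depth for the sibling contribution. Your version is in fact a bit more explicit about the $2^{d}$ sibling-pair factor than the paper's (which phrases the secondary term loosely as ``pairs between a node and its parent''), but the argument is the same.
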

\begin{proof}
    By \lemref{WSPD-comp}, every point participates in
    $\Of(\eps^{-d}\log{\Phi(\PS)})$ \WSPD pairs.  By \lemref{expander}
    the average degree in all the expanders is at most
    $\Of(1/\epsA^2 )$, resulting in the given bound on the number of
    edges.  There are also the additional pairs between a node in
    $\Tree$ and its parent, but since every point participates in only
    $\Of(\log{\Phi(\PS)})$ such pairs, the number of edges is
    dominated by the expanders on the \WSPD pairs.  It follows that
    the number of edges in the resulting graph is
    $\Of(\epsA^{-2} \eps^{-d}n\log{\Phi(\PS)})$. 
\end{proof}

We note that there are point sets, with unbounded spread, such that
any \WSPD on them has weight $\Omega(n^2)$. A simple example is a
sequence of points along a line with exponentially increasing
distances. Thus, requiring the bounded spread on the point set is
unavoidable to achieve optimal size by using the above construction.

\begin{definition}
    For a number $\gamma \in (0,1)$, and failed set of vertices
    $\BSet\subseteq \PS$, a node $v$ of the quadtree $\Tree$ is in the
    \emph{$\gamma$-shadow} if
    $\cardin{\BSet \cap \PcellX{v}} \geq \gamma
    \cardin{\PcellX{v}}$. Naturally, if $v$ is in the $\gamma$-shadow,
    then the points of $\PcellX{v}$ are also in the shadow. As such,
    the \emphi{$\gamma$-shadow} of $\BSet$ is the set of all the
    points in the shadow -- formally,
    \begin{math}
        \ShadowY{\gamma}{\BSet}%
        =%
        \bigcup_{v\in T \colon \cardin{\BSet\cap \PcellX{v}} \geq
           \gamma\cardin{\PcellX{v}}}\PcellX{v}.
    \end{math}
\end{definition}%

Let $\gamma = 1-\epsR/2$.  Note that
$\BSet \subseteq \ShadowY{\gamma}{\BSet}$, since every point of
$\BSet$ is stored as a singleton in a leaf of $\Tree$.

\begin{definition}
    For a node $x$ in $\Tree$, let $\sizeX{x} = \cardin{\PcellX{x}}$,
    and $\sizeBadX{x} = \cardin{\PcellX{x} \cap \BSet}$.
\end{definition}

\begin{lemma}
    \lemlab{shadow-2d}%
    Let $\gamma = 1-\epsR/2$ and $\BSet\subseteq \PS$ be fixed. Then,
    the size of the $\gamma$-shadow of $\BSet$ is at most
    $(1+\epsR)\cardin{\BSet}$.
\end{lemma}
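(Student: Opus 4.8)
The plan is to bound the size of the $\gamma$-shadow by a charging argument against $\BSet$, exploiting the laminar (tree) structure of the quadtree cells. First I would observe that the $\gamma$-shadow, as defined, is a union of cells $\PcellX{v}$ over those nodes $v$ with $\sizeBadX{v} \geq \gamma \sizeX{v}$; call such a node \emph{heavy}. Since the cells of a quadtree are laminar, the heavy nodes that are \emph{maximal} (i.e., not contained in a larger heavy node) have pairwise disjoint point sets, and their union is exactly $\ShadowY{\gamma}{\BSet}$. So it suffices to bound $\sum_{v \text{ maximal heavy}} \sizeX{v}$.

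The key step is that for each maximal heavy node $v$ we have $\sizeX{v} \leq \sizeBadX{v}/\gamma = \sizeBadX{v}/(1-\epsR/2)$, and the sets $\PcellX{v}\cap\BSet$ are disjoint across distinct maximal heavy nodes (again by laminarity). Summing,
\begin{equation*}
    \cardin{\ShadowY{\gamma}{\BSet}}
    =
    \sum_{v \text{ maximal heavy}} \sizeX{v}
    \leq
    \frac{1}{1-\epsR/2} \sum_{v \text{ maximal heavy}} \sizeBadX{v}
    \leq
    \frac{\cardin{\BSet}}{1-\epsR/2}.
\end{equation*}
Finally, I would check the arithmetic that $\frac{1}{1-\epsR/2} \leq 1+\epsR$ for $\epsR \in (0,1/2)$: this is equivalent to $1 \leq (1+\epsR)(1-\epsR/2) = 1 + \epsR/2 - \epsR^2/2$, i.e. $\epsR^2/2 \leq \epsR/2$, i.e. $\epsR \leq 1$, which holds. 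Hence $\cardin{\ShadowY{\gamma}{\BSet}} \leq (1+\epsR)\cardin{\BSet}$.

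The main (and really only) subtlety is making precise that the maximal heavy cells partition the shadow and that the bad points are not double-counted — this is exactly where the quadtree/laminar structure is used, in contrast to the one-dimensional interval arguments of \lemref{shadow} and \lemref{shadow_eps} which needed a more delicate prefix-sum charging. Here the containment structure does all the work: any point in the shadow lies in some heavy cell, hence in the unique maximal heavy cell containing it, and every bad point is counted in at most one maximal heavy cell's $\sizeBadX{\cdot}$. I do not expect any real obstacle beyond stating this cleanly.
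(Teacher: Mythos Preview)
Your proposal is correct and is essentially identical to the paper's proof: the paper also passes to the maximal heavy (``in the $\gamma$-shadow'') quadtree nodes, uses their pairwise disjointness to write $\cardin{\BSet} \geq \sum \sizeBadX{u} \geq \gamma \sum \sizeX{u} = \gamma\,\cardin{\ShadowY{\gamma}{\BSet}}$, and finishes with the same inequality $1/(1-\epsR/2) \leq 1+\epsR$. The only cosmetic difference is that the paper observes $\sum_{u} \sizeBadX{u} = \cardin{\BSet}$ (equality, since every bad point lies in a heavy leaf and hence in some maximal heavy cell), whereas you use the weaker $\leq$, which is all that is needed.
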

\begin{proof}
    Let $H$ be the set of nodes of $\Tree$ that are in the
    $\gamma$-shadow of $\BSet$.  A node $ u \in H$ is \emph{maximal}
    if none of its ancestors is in $H$. Let
    $H' = \brc{u_1,\ldots, u_m}$ be the set of all maximal nodes in
    $H$, and observe that
    \begin{math}
        \bigcup_{u\in H'} \PcellX{u} = \bigcup_{v\in H} \PcellX{v} =
        \ShadowY{\gamma}{\BSet}.
    \end{math}
    For any two nodes $x,y\in H'$, we have
    $\PcellX{x} \cap \PcellX{y} = \varnothing$. Therefore, we have
    \begin{equation*}
        \cardin{\BSet}%
        =%
        \sum_{u \in H'} \sizeBadX{u}%
        \geq%
        \sum_{u \in H'} \gamma  \sizeX{u}%
        =%
        \gamma  \cardin{\ShadowY{\gamma}{\BSet}}.
    \end{equation*}
    Dividing both sides by $\gamma$ implies the claim, since
    $1/\gamma = 1 / (1-\epsR/2) \leq 1+\epsR$. 
\end{proof}

\begin{lemma}
    \lemlab{climbup}%
    Let $\gamma = 1-\epsR/2$.  Fix a node $u \in \Tree$ of the
    quadtree, the failure set $\BSet\subseteq \PS$, its shadow
    $\EBSet = \ShadowY{\gamma}{\BSet}$, and the residual graph
    $\GraphA = \GSB \setminus \BSet$. For a point
    $\pp \in \PcellX{u} \setminus \EBSet$, we define the set
    \begin{equation*}
        \rchY{u}{\pp}%
        =%
        \Set{ \pq \in \PcellX{u} \setminus \BSet}%
        { \dGZ{\GraphA}{\pp}{\pq} \leq 2\cdot\diamX{\cellX{u}}
           \bigr.}.%
    \end{equation*}
    Then, we have
    \begin{math}
        \cardin{\rchY{u}{\pp}}%
        \geq%
        3\epsA \cardin{\PcellX{u}}.
    \end{math}
\end{lemma}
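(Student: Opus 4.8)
The plan is to prove \lemref{climbup} by induction on the height of the node $u$ in the quadtree $\Tree$, climbing up from the leaves. Throughout, fix $\pp \in \PcellX{u} \setminus \EBSet$; since $\pp$ is not in the $\gamma$-shadow, for \emph{every} ancestor-or-equal node $w$ of the leaf containing $\pp$, the cell $\cellX{w}$ satisfies $\cardin{\BSet \cap \PcellX{w}} < \gamma\cardin{\PcellX{w}}$, i.e. at least a $(1-\gamma) = \epsR/2$ fraction of $\PcellX{w}$ survives. This is the key structural consequence of $\pp \notin \EBSet$ that we will use at each level.

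\textbf{Base case.} If $u$ is the leaf containing $\pp$, then $\PcellX{u} = \{\pp\}$ (points are stored as singletons), so $\rchY{u}{\pp} = \{\pp\}$ and $\cardin{\rchY{u}{\pp}} = 1 = \cardin{\PcellX{u}} \geq 3\epsA\cardin{\PcellX{u}}$ as long as $3\epsA \le 1$, which holds since $\epsA = \epsR/8 < 1/16$. More generally, for the base one can also take any node $u$ such that the child $u'$ containing $\pp$ already satisfies the claim trivially; the honest base case is the leaf.

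\textbf{Inductive step.} Let $u$ have children, and let $u'$ be the child of $u$ whose cell contains $\pp$. By the inductive hypothesis, $\cardin{\rchY{u'}{\pp}} \ge 3\epsA \cardin{\PcellX{u'}}$, and every point of $\rchY{u'}{\pp}$ is reachable from $\pp$ within stretch $2\diamX{\cellX{u'}} \le 2\diamX{\cellX{u}}$. Now I want to push this to all the other children (siblings) of $u'$, and hence to all of $\PcellX{u}$. For a sibling $v$ of $u'$ (same parent $u$), the graph $\GSB$ contains the bipartite expander of \lemref{expander} between $\PcellX{u'}$ and $\PcellX{v}$ built with parameter $\epsA$. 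The set $\rchY{u'}{\pp}$ has at least a $3\epsA$ fraction of $\PcellX{u'}$, so by property (I) of \lemref{expander} its neighborhood (inside $\GSB$, before deletions) covers more than a $(1-\epsA)$ fraction of $\PcellX{v}$. Since at least an $\epsR/2 = 4\epsA$ fraction of $\PcellX{v}$ survives $\BSet$ (because $\pp \notin \EBSet$ means no ancestor of $\pp$'s leaf — in particular $v$ is a sibling inside $u$, and $u$ is such an ancestor with $\PcellX{v}\subseteq\PcellX{u}$... here I must be a little careful: I actually need that $v$ itself is not in the $\gamma$-shadow, which is \emph{not} guaranteed. Instead I use that $u$ is not in the $\gamma$-shadow, so $\cardin{\BSet\cap\PcellX{u}} < \gamma\cardin{\PcellX{u}}$, and argue globally rather than per-sibling — see below). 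The surviving points of $\PcellX{v}$ that are also neighbors of $\rchY{u'}{\pp}$ are then reachable from $\pp$ with one extra expander hop, of length at most $\diamX{\cellX{u})}$ more than before; since a quadtree with $2^d$ children has $\diamX{\cellX{u'}} = \diamX{\cellX{u}}/2$, the total stretch stays below $2\diamX{\cellX{u}}$. Summing the surviving reachable points over all children of $u$, and using $\cardin{\BSet\cap\PcellX{u}} < \gamma\cardin{\PcellX{u}}$ together with the expander covering bound, gives $\cardin{\rchY{u}{\pp}} \ge 3\epsA\cardin{\PcellX{u}}$, closing the induction. The constants work because $3\epsA$ (the hypothesis fraction) maps under the expander to $>1-\epsA$ coverage, and $(1-\epsA)$ coverage intersected with a $(1-\gamma)=4\epsA$-surviving set still has at least $3\epsA$ of each sibling; reassembling across siblings and accounting for $u'$ itself recovers the $3\epsA$ fraction of $\PcellX{u}$.

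\textbf{Main obstacle.} The delicate point is the interaction between the expander's \emph{one-sided} covering guarantee (property (I) needs the source side to have $\ge \epsA$ fraction, and then covers $>1-\epsA$ of the target) and the fact that individual sibling cells $\PcellX{v}$ need \emph{not} be outside the $\gamma$-shadow — only $u$ (an ancestor of $\pp$'s leaf) is. So I cannot assert a large surviving fraction in each sibling separately; instead I must aggregate: the total number of deleted points inside $\PcellX{u}$ is $<\gamma\cardin{\PcellX{u}}$, while the expander from $\PcellX{u'}$ reaches $>(1-\epsA)$ of \emph{each} sibling, hence reaches $>(1-\epsA)\cardin{\PcellX{u}\setminus\PcellX{u'}}$ points of $\PcellX{u}$ overall; subtracting the $<\gamma\cardin{\PcellX{u}}$ deleted ones and adding back $\rchY{u'}{\pp}$ leaves at least $((1-\epsA) - \gamma + \text{slack})\cardin{\PcellX{u}} \ge 3\epsA\cardin{\PcellX{u}}$ survivors reachable from $\pp$, provided $1-\gamma = \epsR/2 = 4\epsA$ is comfortably larger than $3\epsA+\epsA$. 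Getting this arithmetic to close with $\epsA = \epsR/8$ (rather than needing a smaller $\epsA$) is the one place real care is required; everything else is a routine expander-plus-triangle-inequality bookkeeping.
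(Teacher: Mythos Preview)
Your overall strategy---induction up the quadtree path from the leaf containing $\pp$, using the sibling expanders at each level, and bounding the path length by the geometric sum of cell diameters---is exactly the paper's. The gap is in the \emph{inductive hypothesis}: carrying only $\cardin{\rchY{u'}{\pp}} \geq 3\epsA\,\sizeX{u'}$ is too weak to close the step. With that hypothesis, your aggregated lower bound on $\cardin{\rchY{u}{\pp}}$ is at best
\[
3\epsA\,\sizeX{u'} \;+\; (1-\epsA)\bigl(\sizeX{u}-\sizeX{u'}\bigr) \;-\; \sizeBadX{u}
\;=\; \bigl[(1-\epsA)\,\sizeX{u} - \sizeBadX{u}\bigr] \;-\; (1-4\epsA)\,\sizeX{u'}.
\]
The bracketed part is what you want, but the correction $-(1-4\epsA)\,\sizeX{u'}$ is strictly negative (since $\epsA=\epsR/8<1/16$) and can dominate: if $u'$ holds most of $\PcellX{u}$ with $\sizeBadX{u'}=0$, while all of $\BSet\cap\PcellX{u}$ sits in a small sibling that gets wiped out, the expander hop into that sibling yields nothing and the $3\epsA$ fraction inherited in $u'$ is not enough to cover $3\epsA$ of the larger $\sizeX{u}$. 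Making $\epsA$ smaller only makes $(1-4\epsA)$ larger, so no choice of $\epsA$ rescues the weak invariant.

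The fix---and this is what the paper does---is to strengthen the invariant along the path $u_1,\dots,u_j=u$ to
\[
\cardin{\rchY{u_i}{\pp}} \;\geq\; (1-\epsA)\,\sizeX{u_i} \;-\; \sizeBadX{u_i}.
\]
This telescopes cleanly: the expander from $\rchY{u'}{\pp}$ contributes at least $(1-\epsA)\,\sizeX{v}-\sizeBadX{v}$ survivors in each sibling $v$ (a valid lower bound even when negative), and summing these with $(1-\epsA)\,\sizeX{u'}-\sizeBadX{u'}$ gives exactly $(1-\epsA)\,\sizeX{u}-\sizeBadX{u}$. Only \emph{after} the telescoping does one invoke $\sizeBadX{u}<\gamma\,\sizeX{u}$ (from $\pp\notin\EBSet$) to deduce $\geq 3\epsA\,\sizeX{u}$. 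Your ``main obstacle'' paragraph correctly diagnoses that per-sibling shadow control is unavailable, but the actual shortfall comes from the weak contribution of $u'$ itself; once you carry the stronger invariant the rest of your argument (expander application, path-length bound) goes through unchanged.
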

\begin{proof}
    Let $u_1,u_2,\dots,u_j=u$ be the sequence of nodes in the quadtree
    from the leaf $u_1$ that contains (only) $\pp$, to the node $u$.
    A \emph{level} of a point $\pq \in \PcellX{u}$, denoted by
    $\levelX{\pq}$, is the first index $i$, such that
    $\pq \in \PcellX{u_i}$.  A \emph{skipping path} in $\GSB$, is a
    sequence of edges
    $\pp \pq_1, \pq_1 \pq_2, \ldots \pq_{m-1} \pq_{m}$, such that
    $\levelX{\pq_i} < \levelX{\pq_{i+1}}$, for all $i$.

    Let $\orchX{i}$ be the set of all points in
    $\PcellX{u_i} \setminus \BSet$ that are reachable by a skipping
    path in $\GraphA$ from $\pp$.  We claim, for $i=1,\ldots, j$, that
    \begin{equation*}
        \cardin{\orchX{i}}%
        \geq
        (1-\epsA) \sizeX{u_i} -
        \sizeBadX{u_i}
        \geq
        (1 - \epsA - \gamma) \sizeX{u_{i}}
        =%
        (\epsR/2 - \epsA ) \sizeX{u_{i}}
        =%
        3\epsA \sizeX{u_i},
    \end{equation*}
    since $\epsA = \epsR/8$ and $\pp$ is not in the $\gamma$-shadow.
    The claim clearly holds for $u_1$. So, assume inductively that the
    claim holds for $u_1, \ldots, u_{j-1}$.  Let $v_1, \ldots, v_m$ be
    the children of $u_j$ that have points stored in them (excluding
    $u_{j-1}$). There is an expander between $\PcellX{u_{j-1}}$ and
    $\PcellX{v_i}$, for all $i$, as a subgraph of $\GSB$.  It follows,
    by induction, that
    \begin{align*}
      \cardin{\orchX{j}}
      &\geq %
        (1-\epsA) \sizeX{u_{j-1}} -
        \sizeBadX{u_{j-1}}
        +
        \sum_i \pth{ (1-\epsA) \sizeX{v_i}  -
        \sizeBadX{v_{i}} \bigr.}
      \\&
      =%
      (1-\epsA) \sizeX{u_{j-1}}
      +
      \sum_i  (1-\epsA) \sizeX{v_i}
      -\Bigl( \sizeBadX{u_{j-1}} + \sum_i \sizeBadX{v_i} \Bigr)
      \\&
      =
      (1-\epsA) \sizeX{u_{j}}
      -
      \sizeBadX{u_{j}}.
    \end{align*}

    Observe that a skipping path from $\pp$ to $\pq \in \PcellX{u_j}$
    has length at most
    \begin{equation*}
        \sum_{i=1}^j \diamX{\cellX{u_i}}%
        \leq %
        \diamX{\cellX{u_j}} \sum_{i=1}^j 2^{1-j}
        \leq%
        2 \cdot \diamX{\cellX{u_j}}.
    \end{equation*}
    Thus, $ \orchX{j} \subseteq \rchY{u}{\pp}$, and the claim follows. 
\end{proof}

Now we are ready to prove that $\GSB$ is a reliable spanner.

\begin{lemma}
    \lemlab{G_2d-bdd}%
    For a set $\PS \subseteq \Re^d$ of $n$ points and parameters
    $\eps \in (0,1)$ and $\epsR \in (0,1/2)$, the graph $\GSB$ is a
    $\epsR$-reliable $(1+\eps)$-spanner with
    $\Of\bigl( \eps^{-d} \epsR^{-2} n \log \SpreadC(\PS) \bigr)$
    edges, where $ \SpreadC(\PS)$ is the spread of $\PS$.
\end{lemma}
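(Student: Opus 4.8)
The plan is to read off the edge bound directly from \lemref{G_2d-size}, and then build the $(1+\eps)$-path out of the two local reachability guarantees of \lemref{climbup}, bridged by one \WSPD expander edge. Since the construction of $\GSB$ uses $\epsA = \epsR/8$, \lemref{G_2d-size} immediately yields that $\GSB$ has $\Of(\epsA^{-2}\eps^{-d} n \log \SpreadC(\PS)) = \Of(\eps^{-d}\epsR^{-2} n \log \SpreadC(\PS))$ edges, so only the reliability has to be argued.

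Fix a failure set $\BSet \subseteq \PS$, put $\gamma = 1-\epsR/2$, and let $\EBSet = \ShadowY{\gamma}{\BSet}$ be the $\gamma$-shadow. By \lemref{shadow-2d} we have $\cardin{\EBSet} \leq (1+\epsR)\cardin{\BSet}$, which is exactly the loss allowed for an $\epsR$-reliable spanner; hence it suffices to exhibit, for every pair $\pp,\pq \in \PS \setminus \EBSet$, a $(1+\eps)$-path between them in $\GraphA = \GSB \setminus \BSet$. Let $\{u,v\}$ be the unique \WSPD pair with $\pp \in \PcellX{u}$ and $\pq \in \PcellX{v}$. Because the quadtree \WSPD is built to be $(6/\eps)$-separated with respect to the cells, $(6/\eps)\max(\diamX{\cellX{u}},\diamX{\cellX{v}}) \leq \distSetY{\cellX{u}}{\cellX{v}} \leq \dY{\pp}{\pq}$, and therefore $\max(\diamX{\cellX{u}},\diamX{\cellX{v}}) \leq (\eps/6)\dY{\pp}{\pq}$.

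Now I apply \lemref{climbup} to $u$ with the point $\pp$, and to $v$ with $\pq$: the reachable sets $\rchY{u}{\pp}$ and $\rchY{v}{\pq}$ — consisting of surviving vertices reachable inside $\GraphA$ within graph-distance $2\diamX{\cellX{u}}$ of $\pp$ in $\cellX{u}$, resp.\ $2\diamX{\cellX{v}}$ of $\pq$ in $\cellX{v}$ — satisfy $\cardin{\rchY{u}{\pp}} \geq 3\epsA\cardin{\PcellX{u}}$ and $\cardin{\rchY{v}{\pq}} \geq 3\epsA\cardin{\PcellX{v}}$. Since $\GSB$ contains the bipartite $\epsA$-expander of \lemref{expander} on $\PcellX{u}$ and $\PcellX{v}$, and $\cardin{\rchY{u}{\pp}} \geq \epsA\cardin{\PcellX{u}}$, the expander property gives $\cardin{\NbrX{\rchY{u}{\pp}}} > (1-\epsA)\cardin{\PcellX{v}}$. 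As $\cardin{\NbrX{\rchY{u}{\pp}}} + \cardin{\rchY{v}{\pq}} > (1-\epsA)\cardin{\PcellX{v}} + 3\epsA\cardin{\PcellX{v}} > \cardin{\PcellX{v}}$, there is a vertex $w \in \NbrX{\rchY{u}{\pp}} \cap \rchY{v}{\pq}$ together with a vertex $x \in \rchY{u}{\pp}$ such that $xw \in \EdgesX{\GSB}$; neither $x$ nor $w$ lies in $\BSet$, so this edge survives in $\GraphA$. Concatenating the $\GraphA$-path $\pp \leadsto x$, the edge $xw$, and the $\GraphA$-path $w \leadsto \pq$ produces a path in $\GraphA$ of length at most $2\diamX{\cellX{u}} + \dY{x}{w} + 2\diamX{\cellX{v}}$; bounding $\dY{x}{w} \leq \dY{x}{\pp} + \dY{\pp}{\pq} + \dY{\pq}{w} \leq \diamX{\cellX{u}} + \dY{\pp}{\pq} + \diamX{\cellX{v}}$ and plugging in the diameter estimate above, the total is at most $\dY{\pp}{\pq} + 6\max(\diamX{\cellX{u}},\diamX{\cellX{v}}) \leq (1+\eps)\dY{\pp}{\pq}$, as required.

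The step I expect to demand the most care is verifying that the bridging expander edge actually survives the removal of $\BSet$: one has to keep two kinds of ``lost'' vertices apart — those deleted outright and those absorbed into the $\gamma$-shadow — and confirm that the constant $3\epsA = 3\epsR/8$ delivered by \lemref{climbup} (which in turn pins down the choices $\gamma = 1-\epsR/2$ and $\epsA = \epsR/8$) beats the $(1-\epsA)$ expansion guarantee, so that the $\pp$-reachable and $\pq$-reachable sets of \emph{surviving} vertices are forced to overlap on the $\PcellX{v}$ side. Everything else — the edge count and the stretch bound — is routine bookkeeping given \lemref{G_2d-size}, \lemref{climbup}, and the $(6/\eps)$-separation of the \WSPD.
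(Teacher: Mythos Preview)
Your proof is correct and follows essentially the same approach as the paper: define $\EBSet$ as the $(1-\epsR/2)$-shadow, invoke \lemref{shadow-2d} for the loss bound, use \lemref{climbup} on both sides of the separating \WSPD pair, and bridge with one expander edge whose survival follows from the $3\epsA$ versus $(1-\epsA)$ count. The only cosmetic differences are that the paper phrases the intersection step as $\cardin{Y \cap \rchY{v}{\pq}} \geq \cardin{\rchY{v}{\pq}} - \cardin{\PcellX{v}\setminus Y} > 0$ rather than a pigeonhole sum, and bounds $\dY{\pp'}{\pq'}$ via $\distSetY{\cellX{u}}{\cellX{v}}$ instead of the triangle inequality through $\pp$ and $\pq$; both routes give the same $3\diamX{\cellX{u}}+3\diamX{\cellX{v}}$ overhead.
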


\begin{proof}
    Let $\epsA=\epsR/8$ and $\gamma = 1 - \epsR/2$.  The bound on the
    number of edges follows by \lemref{G_2d-size}.

    Let $\BSet$ be a set of faulty vertices of $\GSB$, and let
    $\GraphA = \GSB \setminus \BSet$ be the residual graph. We define
    $\EBSet$ to contain the vertices that are in the $\gamma$-shadow
    of $\BSet$. Then, we have $\BSet \subseteq \EBSet$ and
    $\cardin{\EBSet} \leq (1+\epsR) \cardin{\BSet}$ by
    \lemref{shadow-2d}.  Finally, we need to show that there exists a
    $(1+\eps)$-path between any $\pp,\pq\in \PS \setminus \EBSet$.

    Let $\{u,v\}\in\WS$ be the pair that separates $\pp$ and $\pq$
    with $\pp \in \PcellX{u}$ and $\pq \in \PcellX{v}$, see
    \figref{1+eps-path}.  Let $\rchY{u}{\pp}$ (resp. $\rchY{v}{\pq}$)
    be the set of points in $\PcellX{u}$ (resp. $\PcellX{v}$) that are
    reachable in $\GraphA$ from $\pp$ (resp. $\pq$) with paths that
    have lengths at most $2 \cdot \diamX{\cellX{u}}$ (resp.
    $2 \cdot \diamX{\cellX{v}}$).  By \lemref{climbup},
    $\cardin{\rchY{u}{\pp}} \geq 3 \epsA \sizeX{u} \geq \epsA
    \sizeX{u}$ and $\cardin{\rchY{v}{\pq}} \geq 3 \epsA \sizeX{v}$.

    Since there is a bipartite expander between $\PcellX{u}$ and
    $\PcellX{v}$ with parameter $\epsA$, by \lemref{expander}, the
    neighborhood $Y$ of $\rchY{u}{\pp}$ in $\PcellX{v}$ has size at
    least $(1-\epsA) \sizeX{v}$. Observe that
    \begin{math}
        \cardin{Y \cap \rchY{v}{\pq}} =%
        \cardin{ \rchY{v}{\pq} \setminus \pth{\PcellX{v} \setminus Y}
        }%
        \geq%
        \cardin{ \rchY{v}{\pq}} - \cardin{ {\PcellX{v} \setminus Y} }
        \geq%
        3 \epsA \sizeX{v} -\epsA \sizeX{v} > 0.
    \end{math}
    Therefore, there is a point $\pq'\in Y \cap \rchY{v}{\pq}$, and a
    point $\pp'\in \rchY{u}{\pp}$, such that
    $\pp'\pq' \in \EdgesX{\GSB}$. We have that
    \begin{align*}
      \dGZ{\GraphA}{\pp}{\pq}
      &%
        \leq
        \dGZ{\GraphA}{\pp}{\pp'} +
        \dGZ{\GraphA}{\pp'}{\pq'} +
        \dGZ{\GraphA}{\pq'}{\pq}
      \\&%
      \leq%
      2 \cdot\diamX{\cellX{u}} + \dY{\pp'}{\pq'} + 2\cdot\diamX{\cellX{v}}
      \\&%
      \leq%
      3\cdot \diamX{\cellX{u}} + \distSetY{\cellX{u}}{\cellX{v}} +
      3\cdot\diamX{\cellX{v}}
      \\&%
      \leq%
      \pth{1 + 6 \cdot\frac{\eps}{6}}\cdot \distSetY{\cellX{u}}{\cellX{v}}
      \\&%
      \leq%
      \pth{1 + \eps}\cdot \dY{\pp}{\pq}. 
    \end{align*} 
\end{proof}

\begin{figure}[t]
    \centerline{\includegraphics[width=0.55\linewidth]%
       {figs/1_eps_path}}%
    \caption{The pair $\{u,v\}\in\WS$ that separates $\pp$ and
       $\pq$. The blue path is a $(1+\eps)$-path between $\pp$ and
       $\pq$ in the graph $\GSB\setminus \BSet$.}
    \figlab{1+eps-path}
\end{figure}

\section{Conclusions}
\seclab{conclusions}

A natural open question is whether $\epsR$-reliable spanners can be
constructed with $\Of(n\log n)$ edges for general point sets. There
are different approaches that lead to near optimal bounds. While we
use the one-dimensional construction with a careful application of
locality-sensitive orderings, Bose \etal \cite{bcdm-norgm-18} uses
\WSPD, centroid decomposition and an idea of Willard
\cite{w-mdsfde-82} for order maintenance. Another natural open
question is how to construct reliable spanners that are required to be
subgraphs of a given graph.

\subsection*{Acknowledgements.}
The authors thank Pat Morin for his useful comments on this
manuscript, and keeping us updated about his own work
\cite{bcdm-norgm-18}. The authors also thank the anonymous referees
for the detailed and useful comments.

\newcommand{\etalchar}[1]{$^{#1}$}
 \providecommand{\CNFWADS}{\CNFX{WADS}}  \providecommand{\CNFX}[1]{
  {\em{\textrm{(#1)}}}}  \providecommand{\CNFISAAC}{\CNFX{ISAAC}}
  \providecommand{\CNFCCCG}{\CNFX{CCCG}}


\end{document}